\newif\ifnotes\notestrue %set this to true if notes are visible and to false (next line) if they should be hidden
\pgfplotsset{compat=1.5}
\newtheorem{theorem}{Theorem}
\newtheorem{lemma}[theorem]{Lemma}
\newtheorem{definition}[theorem]{Definition}
\newtheorem{claim}[theorem]{\bf Claim}
\newtheorem{observation}[theorem]{\bf Observation}
\newtheorem{example}[theorem]{Example}
\newenvironment{proofof}[1]{\begin{trivlist} \item {\bf Proof
#1:~~}}
  {\qed\end{trivlist}}
\renewenvironment{proofof}[1]{\par\medskip\noindent{\bf Proof of #1: \ }}{\hfill$\Box$\par\medskip}
\newcommand{\namedref}[2]{\hyperref[#2]{#1~\ref*{#2}}}
\newcommand{\thmlab}[1]{\label{thm:#1}}
\newcommand{\thmref}[1]{\namedref{Theorem}{thm:#1}}
\newcommand{\obslab}[1]{\label{obs:#1}}
\newcommand{\obsref}[1]{\namedref{Observation}{obs:#1}}
\newcommand{\lemmlab}[1]{\label{lemm:#1}}
\newcommand{\lemmref}[1]{\namedref{Lemma}{lemm:#1}}
\newcommand{\claimlab}[1]{\label{claim:#1}}
\newcommand{\claimref}[1]{\namedref{Claim}{claim:#1}}
\newcommand{\seclab}[1]{\label{sec:#1}}
\newcommand{\figlab}[1]{\label{fig:#1}}
\newcommand{\figref}[1]{\namedref{Figure}{fig:#1}}
\newcommand{\COMMENTED}[1]{{}}
\newcommand{\EEx}[1]{\ensuremath{\mathbf{E}\Big[#1\Big]}}
\newcommand{\eps}{\epsilon}
\newcommand{\ed}{\ensuremath{\mathsf{ed}}}
\newcommand{\HAM}{\ensuremath{\mathsf{HAM}}}
\newcommand{\mha}{\ensuremath{\mathsf{ModifiedHirschberg}}}
\newcommand{\mdef}[1]{{\ensuremath{#1}}\xspace}  % Math Def which can also be used in normal text.
\newcommand{\superscript}[1]{\ensuremath{^{\mbox{\tiny{\textit{#1}}}}}\xspace}
\def \th {\superscript{th}}     % 'The i-th entry it a list...' --> i\th
\newcommand{\flr}[1]{\mdef{\left\lfloor#1\right\rfloor}}              % Absolute value
\newcommand{\ceil}[1]{\mdef{\left\lceil#1\right\rceil}}               % Absolute value
\renewcommand{\O}[1]{\ensuremath{\mathcal{O}\left(#1\right)}}						% Big O
\newcommand{\ignore}[1]{}
\newcommand{\elena}[1]{\textcolor{red}{{\bf (Elena:} {#1}{\bf ) }} \marginpar{\tiny\bf
             \begin{minipage}[t]{0.5in}
               \raggedright E:
                \end{minipage}}}
\newcommand{\erfan}[1]{\textcolor{blue}{{\bf (Erfan:} {#1}{\bf ) }} \marginpar{\tiny\bf
             \begin{minipage}[t]{0.5in}
               \raggedright E:
                \end{minipage}}}
\newcommand{\samson}[1]{\textcolor{green}{{\bf (Samson:} {#1}{\bf ) }} \marginpar{\tiny\bf
             \begin{minipage}[t]{0.5in}
               \raggedright S:
            \end{minipage}}}            							
\newcommand{\elena}[1]{}
\newcommand{\erfan}[1]{}
\newcommand{\samson}[1]{}
\definecolor{mahogany}{rgb}{0.75, 0.25, 0.0}
\definecolor{darkblue}{rgb}{0.0, 0.0, 0.55}
\definecolor{darkpastelgreen}{rgb}{0.01, 0.75, 0.24}
\definecolor{darkgreen}{rgb}{0.0, 0.2, 0.13}
\definecolor{darkgoldenrod}{rgb}{0.72, 0.53, 0.04}
\definecolor{darkred}{rgb}{0.55, 0.0, 0.0}
\title{Longest Alignment with Edits in Data Streams}
\author{Elena Grigorescu$^{*}$ \and Erfan {Sadeqi Azer}$^{\dagger}$ \and Samson Zhou $^{\ddagger}$% <-this % stops a space
\thanks{$^{*}$Department of Computer Science, Purdue University, West Lafayette, IN.
        {\tt\small elena-g@purdue.edu}. Research supported in part by NSF CCF-1649515.}%
\thanks{$^{\dagger}$School of Informatics and Computing, Indiana University, Bloomington, IN.
        {\tt\small esadeqia@indiana.edu}}%
\thanks{$^{\ddagger}$Department of Computer Science, Purdue University, West Lafayette, IN.
        {\tt\small samsonzhou@gmail.com}. Research supported in part by NSF CCF-1649515.}%
}
\begin{document}

\maketitle

%root = near-alignment.tex
\begin{abstract}
Analyzing patterns in data streams generated by network traffic, sensor networks, or satellite feeds is a challenge for systems in which the available storage is limited. In addition, real data is noisy, which makes designing data stream algorithms even more challenging.

Motivated by such challenges, we study algorithms for detecting the similarity of two data streams that can be read in sync. 
Two strings $S, T\in \Sigma^n$ form a $d$-near-alignment if the distance between them in some given metric is at most $d$. 
We study the problem of identifying a longest substring of $S$ and $T$ that forms a  {\em $d$-near-alignment}  under the {\em edit} distance, in  the {\em simultaneous streaming model}. 
In this model,  symbols of strings $S$ and $T$ are streamed at the same time, and the amount of available processing space is sublinear in the length of the strings.

We give several algorithms, including an exact one-pass algorithm that uses $\O{d^2+d\log n}$ bits of space. 
We couple these results with comparable lower bounds. 

%We present an algorithm in the simultaneous streaming model  that returns a multiplicative $(1+\eps)$-approximation to the length of the longest $d$-near-alignment. It uses $\O{d\log^2 n}$ bits of space, with $d=n^c$, for some constant $c<1$. We also give an algorithm that returns an additive $\sqrt{n}$-approximation to the length of the longest $d$-near-alignment using $\O{d\sqrt{n}\log n}$ bits of space.Given a second pass over the data stream, we present an algorithm that returns the longest $d$-near-alignment using $\O{d\sqrt{n}\log n}$ bits of space. 

%, including  that any sketch which can return a multiplicative $(1+\eps)$-approximation to the length of the longest $d$-near-alignment between two strings requires $\Omega(d\log n)$ bits.
\end{abstract}
%root = near-alignment.tex
\section{Introduction}

A data stream is a massive sequence of elements (network packets, database transactions, sensor network reads, or parts of nucleic acids) that requires further processing, while it is too large to be stored entirely. The area of streaming algorithms, initiated in \cite{Alon1996}, is now a core subject in computer science, focusing on re-designing classical algorithms to the setting where the amount of available working space is only sublinear in the size of the data.
Furthermore, the area has connections to other modern topics, including sketching, compressed sensing and communication complexity (for comprehensive surveys, see e.g., \cite{Chakrabarti2015b,Muthukrishnan2004,Gilbert2010}).

In this work we are concerned with approximately measuring the similarity between two data streams, by finding a largest `near-alignment'.  
Two strings $S, T \in \Sigma^n$ form a {\em $d$-near-alignment} if the distance between them in some given metric is at most $d$. 
In this paper we consider the  {\em edit} distance (or {\em Levenshtein} distance), which is the minimum number of insertions, deletions, or substitutions needed to obtain one string from the other. 

We study the {\em $d$-Substring Alignment} problem of finding the longest $d$-near-alignment in the edit distance, consisting of substrings  in $S$ and $T$ of the form $(S[i,j], T[i,j]))$,  
when the symbols of  $S$ and $T$ are streamed in sync\footnote{In this paper, all the techniques and results are presented assuming the input is in binary bits. However, all the results can be adapted for non-binary settings.}.

The following definition formally defines $\ell_{max}$, the quantity that is studied in this paper.
\begin{definition}
The length of the longest {\em $d$-near-alignment} between two strings $S$ and $T$, with length $n$, is 
\[\ell_{max}=\max_{1\leq i\leq j\leq n}\{j-i+1\,|\,ed(S[i,j],T[i,j])\leq d\},\]
where $ed(S[i,j],T[i,j])$ denotes the minimum number of insertions, deletions, or substitutions needed to obtain $T[i,j]$ from $S[i,j]$.
\end{definition}

\begin{example}
Let $S=``1234yyyyyy123456789xxxxx"$ and $T=``1234xxxxxx123467890yyyyy"$, and $d=2$. The longest {\em $d$-near-alignment} between $S$ and $T$ is ``123456789'' from $S$ and ``123467890'' from $T$. This implies that $\ell_{max}=9$.
\end{example}

% abcdefghijklmnopqrstuvwxyz
 
Specifically, in the {\em simultaneous streaming model}, the symbols at index $i$ of two strings $S$ and $T$ arrive at the same time, and  the pair $(S[i], T[i])$ arrives right before the pair $(S[i+1], T[i+1])$. 
In the streaming model, the algorithm can only use a small amount of space, ideally sublinear in the length of the input. 
The input may be revealed in one pass or multiple passes, and the goal is to obtain a solution to an optimization problem. One pass algorithms have a wider range of applications. Though, some applications might allow two or more passes over input.

\subsection*{Our results}
We obtain several algorithms and lower bounds for the $d$-Substring Alignment problem in the simultaneous streaming model, as detailed next. We will use $\ell_{max}$ to denote the length of a longest $d$-near-alignment, in the edit distance.

As a warm-up, we start with a multiplicative and an additive approximation. 
 
\newcommand{\thmmultiply}{There exists a one-pass simultaneous streaming algorithm that provides a $(1+\eps)$-approximation to  $\ell_{max}$, using $\O{\frac{d\log^2 n}{\eps\log(1+\eps)}}$ bits of space.}
\begin{theorem} \thmlab{thm:multiply}
\thmmultiply
\end{theorem}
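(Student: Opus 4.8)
The plan is to guess $\ell_{max}$ up to a $(1+\eps)$ factor from a geometric grid of $\O{\log n/\log(1+\eps)}$ candidate lengths, and to carry out each guess using only $\O{1/\eps}$ cheap banded edit-distance computations, each fitting in $\O{d\log n}$ bits.

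\emph{A monotonicity lemma.} The one non-routine ingredient is that the edit distance of a window is monotone under taking sub-windows: if $i\le i'\le j'\le j$ then $\ed(S[i',j'],T[i',j'])\le\ed(S[i,j],T[i,j])$. I would prove this by reducing to the case of deleting the last symbol from both strings (the left-endpoint case is symmetric via string reversal). That case follows from two observations about an optimal alignment $\mathcal A$ of $S[i,j]$ and $T[i,j]$: (i) appending a common symbol to two strings leaves the edit distance unchanged, since one may always take an optimal alignment that matches two equal appended symbols; and (ii) a short case analysis on how $\mathcal A$ treats the pair $S[j],T[j]$ --- matched or substituted together, both deleted/inserted, or $S[j]$ deleted while $T[j]$ is matched to some $S[k]$ with $k<j$, which forces $S[k{+}1,j{-}1]$ and $S[j]$ to be deleted --- yields in every case an alignment of $S[i,j{-}1]$ and $T[i,j{-}1]$ of cost at most $\ed(S[i,j],T[i,j])$. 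Two consequences I will use: for fixed $j$, the set of left endpoints $i$ with $\ed(S[i,j],T[i,j])\le d$ is a suffix of $\{1,\dots,j\}$; and extending the right endpoint can only increase the edit distance.

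\emph{The algorithm.} For each scale $t=0,1,\dots,\ceil{\log_{1+\eps}n}$ put $g_t=\max\{1,\ceil{\eps(1+\eps)^t/2}\}$ and $C_t=\ceil{(1+\eps)^{t+1}}$. For each scale $t$ and each ``anchor'' $p$ that is a multiple of $g_t$, at the moment $S[p],T[p]$ arrive I start a threshold-$d$ computation of $\ed(S[p,j],T[p,j])$ by the standard banded dynamic program (equivalently a Landau--Vishkin table): when $S[j],T[j]$ arrive the band is extended by one diagonal --- one new row and one new column --- and only the $\O{d}$ cells with $|x-y|\le d$ are retained, each value capped at $d{+}1$, so the state is $\O{d\log n}$ bits. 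This computation is dropped once $j-p+1>C_t$, or once its value exceeds $d$ (which, by the lemma, it then does permanently). Whenever a live computation reports $\ed(S[p,j],T[p,j])\le d$, I update a global maximum with $j-p+1$; the final value of that maximum is the output.

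\emph{Correctness, space, and the hard part.} Every recorded value is the length of a genuine $d$-near-alignment $(S[p,j],T[p,j])$, so the output is at most $\ell_{max}$. For the other direction, let $[i^*,i^*+\ell_{max}-1]$ realize $\ell_{max}$ and choose $t^*$ with $(1+\eps)^{t^*}\le\ell_{max}<(1+\eps)^{t^*+1}$; the first $g_{t^*}$ positions of this window contain a scale-$t^*$ anchor $p$, and since $[p,j]\subseteq[i^*,i^*+\ell_{max}-1]$ for every $j\le i^*+\ell_{max}-1$, the lemma gives $\ed(S[p,j],T[p,j])\le d$ throughout, so (as $C_{t^*}\ge\ell_{max}$) this computation survives long enough to record a length at least $\ell_{max}-g_{t^*}$; the choice of $g_t$ makes this equal to $\ell_{max}$ when $\ell_{max}=\O{1/\eps}$ and at least $(1-\eps/2)\ell_{max}$ otherwise. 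Thus the output is within a $(1+\eps)$ factor of $\ell_{max}$ after rescaling $\eps$ by a constant. For space, at any time the live computations at scale $t$ sit at multiples of $g_t$ inside a length-$C_t$ window, so there are $C_t/g_t+1=\O{1/\eps}$ of them; summing over the $\O{\log n/\log(1+\eps)}$ scales gives $\O{\frac{\log n}{\eps\log(1+\eps)}}$ live banded DPs of $\O{d\log n}$ bits each, i.e.\ $\O{\frac{d\log^2 n}{\eps\log(1+\eps)}}$ bits, in a single synchronized pass. The anticipated main obstacle is the monotonicity lemma: the anchoring reduction, the banded computation, and the early-termination rule are routine, but they all rest on the clean ``sub-windows only shrink the edit distance'' statement, which needs the careful alignment-surgery argument above.
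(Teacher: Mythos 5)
Your proposal is correct and structurally the same as the paper's: both maintain a hierarchy of geometrically spaced checkpoints with bounded lifetimes (yours parameterized by $(g_t,C_t)$, the paper's by levels with $\alpha=\sqrt{1+\eps}-1$ following the scheme of Berenbrink et al.), run an $\O{d\log n}$-bit threshold-$d$ edit-distance computation forward from each live checkpoint, and report the longest surviving window, yielding the identical space bound. The only substantive differences are that the paper invokes the Belazzougui--Zhang sketch as a black box where you build the $2d+1$-diagonal banded DP by hand (same space, worse update time, which the theorem does not claim), and that you explicitly isolate and prove the monotonicity lemma --- $\ed$ of a sub-window is at most $\ed$ of the window --- which the paper's correctness argument uses silently when it asserts that a checkpoint landing inside $[i_{max},j_{max}]$ still detects a $d$-near-alignment; that lemma is true (it is the standard fact that the edit-distance matrix is non-decreasing along diagonals), so making it explicit is a strength of your write-up rather than a gap.
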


\newcommand{\thmadd}{There exists a one-pass simultaneous streaming algorithm that provides a $d$-near-alignment of length at least $\ell_{max}-E$ using $\O{\left(\frac{n}{E}\right)d\log n}$ bits of space. }
\begin{theorem} \thmlab{thm:add}
\thmadd
\end{theorem}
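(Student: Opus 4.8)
The plan is to discretize the candidate left endpoints. I would fix the sampled positions $i_r=rE+1$, $r=0,1,\dots$, that are at most $n$ (at most $\lceil n/E\rceil$ of them), and run, in parallel, one streaming instance per $i_r$. Instance $r$ activates when the stream reaches index $i_r$; thereafter, having read up to stream index $k$ (so $m:=k-i_r+1$), it maintains the capped, banded edit-distance table $D_r$ of $S[i_r,k]$ against $T[i_r,k]$ --- concretely, only its last row $\{D_r[m][b]:|m-b|\le d\}$ and last column $\{D_r[a][m]:|a-m|\le d\}$, with every entry capped at $d+1$. When $(S[k+1],T[k+1])$ arrives it extends $D_r$ by one row and one column via the standard recurrence; for this all instances need only the last $d+1$ symbols of $S$ and of $T$, which I would keep in one shared buffer. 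Instance $r$ also tracks $j_r$, the largest $k$ at which the diagonal entry $D_r[m][m]$ was $\le d$. At the end the algorithm outputs the pair $(S[i_r,j_r],T[i_r,j_r])$ maximizing $j_r-i_r+1$, which by construction is a genuine $d$-near-alignment. Each instance stores $\O{d}$ values in $\{0,\dots,d+1\}$ plus the index $j_r$, i.e.\ $\O{d\log d+\log n}=\O{d\log n}$ bits, and the shared buffer is $\O{d}$ bits, so the total is $\O{(n/E)\,d\log n}$ bits.

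For correctness it remains to show that some instance reports a long enough alignment. Let $[i^*,j^*]$ attain $\ell_{max}$; we may assume $\ell_{max}>E$, since otherwise $\ell_{max}-E\le 0$ and the empty alignment suffices. Because $E<\ell_{max}$ we have $i^*+E-1<j^*\le n$, so the $E$ consecutive positions $i^*,\dots,i^*+E-1$ all lie in $[1,n]$ and contain exactly one sampled position $i_r$; moreover $i_r<j^*$, so instance $r$ is active by step $j^*$. The key fact I would establish is the following: \emph{for strings $A,B$ of equal length $n$ and any $0\le t\le n$, $ed(A[t+1,n],B[t+1,n])\le ed(A,B)$} --- i.e., deleting a common-\emph{length} (but not necessarily common-content) prefix from two equally long strings cannot increase their edit distance. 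Applying it with $A=S[i^*,j^*]$, $B=T[i^*,j^*]$, $t=i_r-i^*\le E-1$ gives $ed(S[i_r,j^*],T[i_r,j^*])\le ed(S[i^*,j^*],T[i^*,j^*])\le d$, so instance $r$ records some $j_r\ge j^*$, and the algorithm outputs an alignment of length $\ge j_r-i_r+1\ge j^*-i_r+1\ge (j^*-i^*+1)-(E-1)=\ell_{max}-E+1$.

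To prove the key fact I would view an optimal edit alignment of $A$ and $B$ as a monotone lattice path $P$ from $(0,0)$ to $(n,n)$: a unit step right (resp.\ up) is a deletion of the next symbol of $A$ (resp.\ an insertion of the next symbol of $B$) at cost $1$, and a diagonal step pairs the next symbols of $A$ and $B$ at cost $0$ if they are equal and $1$ otherwise; the total cost is $ed(A,B)=:D$. Let $(t,b_0)$ be the point where $P$ first reaches first coordinate $t$, and split $P=P_1\cdot P_2$ there. The suffix $P_2$, from $(t,b_0)$ to $(n,n)$, is an alignment of $A[t+1,n]$ with $B[b_0+1,n]$, so $ed(A[t+1,n],B[b_0+1,n])\le\mathrm{cost}(P_2)$. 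The prefix $P_1$ goes from $(0,0)$ to $(t,b_0)$, so if it takes $g$ diagonal steps it takes $t-g$ right steps and $b_0-g$ up steps, whence $\mathrm{cost}(P_1)\ge (t-g)+(b_0-g)\ge |t-b_0|$. Finally $B[t+1,n]$ and $B[b_0+1,n]$ are suffixes of $B$ whose lengths differ by $|t-b_0|$, so their edit distance is at most $|t-b_0|$. Combining these three bounds with the triangle inequality for edit distance,
\[
ed(A[t+1,n],B[t+1,n])\le ed(A[t+1,n],B[b_0+1,n])+|t-b_0|\le \mathrm{cost}(P_2)+\mathrm{cost}(P_1)=D.
\]

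I expect this key fact to be the main obstacle: the naive estimate ``moving the left endpoint by fewer than $E$ positions changes the edit distance by at most $E$'' only yields a $(d+\O{E})$-near-alignment, which is too weak for the statement, and the lattice-path cutting argument above is what upgrades it to a genuine $d$-near-alignment while giving up only $E$ in length. The remaining ingredients --- sampling the left endpoints and maintaining a capped, banded edit-distance table in a one-pass stream --- are standard.
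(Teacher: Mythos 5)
Your proposal is correct and follows the paper's high-level strategy: place a checkpoint at every multiple of $E$, maintain for each checkpoint an $\O{d\log n}$-bit structure deciding whether $\ed(S[c,x],T[c,x])\le d$, and report the longest window found. You substitute an explicitly maintained capped, banded DP (last row and column, Ukkonen-style) for the paper's black-box BZ sketch; both give the same per-checkpoint space, so this is an implementation difference rather than a conceptual one. The genuinely valuable part of your write-up is the ``key fact'' that for equal-length strings, deleting a common-length prefix cannot increase the edit distance, proved via the lattice-path splitting argument plus the triangle inequality. The paper's proof silently relies on exactly this statement: it asserts that the checkpoint $c\in[i_{max},i_{max}+E]$ ``will find a $d$-near-alignment,'' i.e., that $\ed(S[c,j_{max}],T[c,j_{max}])\le d$ follows from $\ed(S[i_{max},j_{max}],T[i_{max},j_{max}])\le d$, without justification, and your observation that the naive perturbation bound (edit distance changes by at most $E$) is too weak to yield a genuine $d$-near-alignment is exactly right. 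So your argument both matches the paper's approach and supplies the lemma it omits. One minor nit: to extend the banded table by one row and one column, the shared buffer needs roughly the last $2d+1$ (not $d+1$) symbols of each string; this does not affect the $\O{\left(\frac{n}{E}\right)d\log n}$ bound.
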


\iffalse
To illustrate the flexibility of the additive algorithm, we give an {\em exact} two-pass algorithm, using $\O{d^2\sqrt{n}\log n}$ bits of space.

\newcommand{\thmexact}{There exists a two-pass algorithm which provides a $d$-near-alignment of length $\ell_{max}$, and uses $\O{d^2\sqrt{n}\log n}$ bits of space.}
\begin{theorem} \thmlab{thm:exact}
\thmexact
\end{theorem}
\fi

Our main result is a one-pass, {\em exact} algorithm that outputs a maximum-length $d$-near-alignment using $\O{d^2+d\log n}$ bits of space. Hence, the multiplicative bound from \thmref{thm:multiply} achieves space savings guarantees if the sequence of edits does not need to be printed and $d=\omega(\log^2 n)$. The additive space bound from \thmref{thm:add} achieves better upper-bounds guarantees if we afford $E=\omega\left(\frac{n\log n}{d}\right)$.

%, at the price of a $\O{d/\log^2 n}$ factor blow-up in the space used compared to the multiplicative bound from \thmref{thm:multiply} when $d=\Omega(\log n)$.The additive approximation in \thmref{thm:add} achieves space savings compared to the exact algorithm for $E=\omega(n/ d \log n)$.

\newcommand{\thmmain}{There exists a deterministic one-pass algorithm that outputs $\ell_{max}$, along with the necessary edit operations,  using $\O{d^2+d\log n}$ bits of space.}
\begin{theorem} \thmlab{thm:main}
\thmmain
\end{theorem}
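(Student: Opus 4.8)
The plan is to first expose a monotonicity structure of $d$-near-alignments and then run a bounded number of small banded edit-distance computations in parallel. The structural step: since edit distance is a metric and chopping the first (resp.\ last) symbol of an equal-length pair cannot increase the distance --- concretely $\ed(aX,bY)\ge \ed(X,Y)$ and $\ed(Xa,Yb)\ge \ed(X,Y)$ when $|X|=|Y|$, which follows by a short case analysis on the last alignment step together with the triangle inequality --- the function $t\mapsto \ed(S[i,t],T[i,t])$ is non-decreasing for each fixed $i$, and $\ed(S[i,t],T[i,t])\ge \ed(S[i+1,t],T[i+1,t])$. Hence the reach $R(i):=\max\{t:\ed(S[i,t],T[i,t])\le d\}$ is well defined and non-decreasing in $i$, so with $a(t):=\min\{i:\ed(S[i,t],T[i,t])\le d\}$ (non-decreasing in $t$) we get $\ell_{max}=\max_i\bigl(R(i)-i+1\bigr)=\max_t\bigl(t-a(t)+1\bigr)$. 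It then suffices to track $a(t)$ online and report the largest $t-a(t)+1$ ever seen, with a witnessing alignment.

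For a fixed start $i$, as long as $\ed(S[i,t],T[i,t])\le d$ the optimal alignment path stays within $\lfloor d/2\rfloor$ diagonals of the main diagonal, so a banded (Landau--Vishkin / Myers bit-vector) computation can be advanced one symbol at a time: the frontier has $\O d$ cells whose successive differences lie in $\{-1,0,1\}$ and so fit in $\O d$ bits, and the update only consults a sliding window of the last $\O d$ symbols of $S$ and $T$. I would run one such instance per ``currently relevant'' start; the claim that drives the space bound is that $\O d$ of them always suffice. The pruning rule is: among two alive starts $i<i'$, discard $i'$ whenever its banded state is a shift of that of $i$, since then both die at time $R(i)+1=R(i')+1$ and $i$ gives the longer alignment; monotonicity (an earlier start carries the larger current edit distance and dies no earlier) is what lets one bound the number of surviving, genuinely distinct states by $\O d$. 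This yields $\O d$ instances, each storing an $\O d$-bit banded state plus a start position at $\O{\log n}$ bits, for $\O{d^2+d\log n}$ total; whenever an instance dies we finalize $R(i)-i+1$ and update the running maximum, and $a(t)$ is the smallest start still alive.

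For the edit operations: by the time the optimal start $i^\star$ and end $j^\star=R(i^\star)$ are identified, that part of the stream is gone, so the best-candidate instance must carry enough to reconstruct an optimal alignment from its final state. I would attach to it an $\O{d\log n}$-bit summary --- the at most $d$ edit operations of a current optimal alignment together with their positions --- maintained as the banded DP advances, and reconstruct the operation list via a banded, $\O d$-space divide-and-conquer (the $\mha$ routine) rather than by re-reading input; the batch-death analysis above shows that at most one alive instance at a time can become the global optimum, so a single such summary suffices.

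The main obstacle is the space analysis of the middle step: proving that $\O d$ concurrent instances always suffice, \emph{online}. The worrying case is a long stretch where $R$ is essentially constant (a long well-matched region of $S$ and $T$), which makes a huge number of starts simultaneously alive; one must argue these collapse to $\O d$ shift-inequivalent banded states, and --- more delicately --- that a start we prune can never later be the witness for $\ell_{max}$, so that the pruning is both correct and implementable in one pass. Getting the interaction between this pruning and the single retained alignment summary exactly right is where I expect the real work to be.
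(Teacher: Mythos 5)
Your structural setup is sound and matches the paper's implicit use of monotonicity: $\ed(S[i,t],T[i,t])$ is non-decreasing in $t$ and non-increasing in $i$, so the reach $R(i)$ is monotone and $\ell_{max}=\max_t\bigl(t-a(t)+1\bigr)$; reducing to tracking the leftmost alive start online is exactly the right target. The gap is the step you yourself flag: the claim that $\O{d}$ concurrent banded instances suffice is not established, and your pruning rule does not deliver it. Discarding $i'$ only when its frontier coincides with that of an earlier start $i$ leaves you holding every pairwise-distinct frontier among alive starts. These frontiers are $(2d+1)$-vectors of values in $\{0,\dots,d,\infty\}$, and even using the pointwise ordering induced by monotonicity, a chain of such states can have length $\Theta(d^2)$ (a long exactly-matched region keeps all of them alive simultaneously, each frozen at a different state). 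That already gives $\O{d^3+d^2\log n}$, not $\O{d^2+d\log n}$. Worse, a frontier alone cannot reproduce the edit operations, so each surviving instance would need its own $\O{d\log n}$-bit traceback; and the assertion that ``at most one alive instance at a time can become the global optimum'' is false --- several alive starts $i_1<i_2<\cdots$ can each set a new record at their respective death times, so a single retained summary is not enough.

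The paper closes exactly this hole with two observations your proposal lacks. First, a \emph{barrier} lemma: if $d+1$ consecutive positions of $S$ align to $d+1$ consecutive positions of $T$, no character before that run can align to any character after it (doing so would cost more than $d$ indels); hence the edit operations preceding the most recent barrier are \emph{common to every candidate start}, and a single shared queue $\mathcal{A}$ of the $d$ most recent such operations replaces per-instance tracebacks --- the future values of $a(t)$ are read off by peeling operations from the front of $\mathcal{A}$. Second, \lemmref{lem:window} shows that the barrier-free suffix (the sliding window $[b,x]$) has length at most $(d+1)^2$, so it can be stored verbatim in $\O{d^2}$ bits and all alignments inside it recomputed on demand by the banded Hirschberg routine $\mha$ of \thmref{thm:mha} in $\O{d^2+d\log n}$ space. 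If you want to salvage your multi-instance design, you would need an argument that collapses the $\Theta(d^2)$ distinct states to $\O{d}$ \emph{and} lets all of them share one traceback; the barrier structure is essentially the only known way to do both, at which point you have reconstructed the paper's proof.
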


We remark that our algorithms can be extended to the more general case where the substrings of $S$ and $T$ need not begin at the same index. Given the promise that a longest alignment of the two strings begins within $\delta$ indices of each other, one may run $\delta$ instances of our algorithms in parallel, thus incurring an extra factor of $\delta$ in the space complexity.

In terms of lower bounds,  if the edits to obtain the longest $d$-near alignment are output, then we trivially must use $\Omega(d\log n)$ bits of space. A straightforward argument shows that this lower bound holds even if the algorithm is not required to output the positions of the mismatched indices.
%or to obtain the longest $d$-near alignment exactly, 

%\newcommand{\thmlbthree}{Any algorithm that outputs $\ell_{max}$  $\Omega(d\log n)$ bits, even in the simultaneous streaming model.}
%\begin{theorem}\thmlab{thm:lbthree}
%\thmlbthree
%\end{theorem}

\newcommand{\thmlbtwo}{For $\eps<1$ and $E\in R^+$, any deterministic algorithm that computes a $(1+\eps)$-multiplicative, or an $E$-additive approximation of $\ell_{max}$ requires $\Omega(d\log n)$ bits of processing space.}
\begin{theorem}\thmlab{thm:lbtwo}
\thmlbtwo
\end{theorem}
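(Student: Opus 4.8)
The plan is to give a reduction from a simple one-way communication problem, using the standard fooling-set / indistinguishability technique for streaming lower bounds. The key observation is that a deterministic streaming algorithm with $s$ bits of space, after reading a prefix of the joint stream, can be in at most $2^s$ distinct memory configurations; if we can exhibit more than $2^s$ prefixes that must lead to distinct outputs, we force $s = \Omega(\log(\text{number of prefixes}))$. To get the $\log n$ factor we need roughly $n^{\Theta(d)}$ such prefixes, so a $d\log n$ bound; I would encode this via the positions (or, more simply, the number) of mismatches in a controlled block.

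Concretely, here is the construction I would carry out. Fix a block of length $L = \Theta(n)$ at the start of the stream in which $S$ and $T$ agree everywhere except possibly on a chosen set of indices; more simply, for a parameter $k \in \{0,1,\dots,d\}$ let the first $L$ symbols be chosen so that $ed(S[1,L],T[1,L]) = k$ exactly (e.g.\ $k$ substitutions spread out so no cheaper edit script exists), and let the remaining $n-L$ symbols of $S$ and $T$ be identical and contain no further cheap alignment longer than $L$. Then in the combined string, $\ell_{max}$ as a function of the ``budget consumed so far'' is a strictly monotone, known function of $k$: if the prefix already costs $k$, then we can only afford $d-k$ more edits in the suffix, which caps the attainable length at some value $\ell(k)$ with $\ell(0) > \ell(1) > \dots > \ell(d)$. (I would design the suffix as a sequence of candidate aligned windows of geometrically or arithmetically increasing lengths, each requiring one more edit than the previous, so that the achievable maximum length is a strictly decreasing function of the remaining budget.) Hence two prefixes with different values of $k$ must drive the algorithm to different outputs — different exact values, and also different values under any $(1+\eps)$-multiplicative or $E$-additive approximation, provided consecutive $\ell(k)$ values are spaced by more than a factor $(1+\eps)^2$ (resp.\ more than $2E$), which costs us only constant factors in the length budget and is absorbed since $\eps < 1$ and $E$ is a fixed constant in the reduction. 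To reach $\Omega(d\log n)$ rather than just $\Omega(\log n)$, I would replace the single scalar $k$ by $d$ independent ``mismatch slots,'' each located in its own sub-block and each independently togglable over $\Theta(n/d)$ positions, so that the prefix carries $\Theta(d)$ independent pieces of information each from a domain of size $\Theta(n/d)$; arguing as above that the algorithm's output must reveal all of this information yields $s \ge \log\big((n/d)^{\Theta(d)}\big) = \Omega(d\log n)$.

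Formally I would state it as a fooling-set argument: exhibit a family $\mathcal{F}$ of prefixes of the joint stream with $|\mathcal{F}| = n^{\Omega(d)}$ such that for every pair of distinct $P, P' \in \mathcal{F}$ there is a common suffix $Q$ for which the correct answer on $P \circ Q$ differs (by more than the allowed approximation slack) from the correct answer on $P' \circ Q$; then a deterministic algorithm using fewer than $\log_2 |\mathcal{F}|$ bits would, by pigeonhole, reach the same memory state on some $P \ne P'$ and therefore produce the same (hence incorrect) answer on at least one of $P \circ Q$, $P' \circ Q$. The main obstacle, and where I would spend the most care, is the gadget design for the suffix: I must ensure simultaneously that (i) the intended alignment of length $\ell(k)$ really does cost only $d-k$ edits, (ii) no \emph{other} substring pair of greater length is a $d$-near-alignment (this requires ruling out cheap edit scripts exploiting periodicity or overlaps between my sub-blocks, e.g.\ by using symbols/patterns with high ``edit-distance rigidity''), and (iii) the $\ell(k)$ values are spread out enough to survive multiplicative or additive approximation. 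Once the gadget has these three properties, the counting argument is routine.
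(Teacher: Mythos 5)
Your overall strategy is the same as the paper's: a one-way communication / fooling-set argument in which Alice's half of the stream encodes the positions of $d$ edit operations (roughly $\binom{n/2}{d}=n^{\Omega(d)}$ choices), and Bob's suffix is chosen, per pair of prefixes, to force the longest $d$-near-alignment to begin immediately after a particular one of those edits, so that the answer reveals that edit's location. The paper states this very tersely (``Alice must communicate the locations of all edit operations, as any one of these locations could be the beginning of the longest $d$-near-alignment''); your fooling-set formalization, and your explicit list of gadget requirements (i)--(iii), is the right way to make that rigorous, and it goes through for the exact and $E$-additive cases (positions within a slot spaced more than $2E$ apart still give $(n/(dE))^{\Theta(d)}$ configurations, hence $\Omega(d\log n)$ for fixed $E$).

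The gap is in the multiplicative case. You claim the $(1+\eps)^2$ spacing of the $\ell$-values ``costs only constant factors,'' but every answer your gadget can produce lies in a window of bounded ratio: the forced alignment starts at some $p\le n/2$ in Alice's half and ends at some $q>n/2$ in Bob's half, so all candidate lengths $q-p$ lie in an interval of multiplicative width at most $2$ (and the trivial candidate $S[1,q']$ of length about $n/2$ caps things from below). You therefore cannot pairwise-separate $n^{\Omega(d)}$ configurations by factors of $(1+\eps)$ when $\eps$ is a constant less than $1$; each ``slot'' admits only $O(\log_{1+\eps}2)=O(1/\eps)$ distinguishable positions, which degrades the bound to $\Omega(d\log(1/\eps))$, not $\Omega(d\log n)$. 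The paper escapes this because its argument really charges the algorithm with reporting the edit operations (equivalently, the exact starting index of the reported alignment), which is exact information immune to the approximation slack; that is the convention you should adopt, or else you should handle the pure length-approximation version by a different reduction (as the paper does separately in \thmref{thm:lbone} via Hamming distance). Also note that the edits must be kept out of the first $\left(1-\frac{1}{1+\eps}\right)n$ (resp.\ first $E$) positions so that a $(1+\eps)$-approximate (resp.\ $E$-additive) solution is still forced to start after one of Alice's edits; your gadget should build this in explicitly.
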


%%%%%%%%%%%%%%%%%%%%%%%%%%%%%%%%%%%%%%%%%%%%%%%%%%

We also give a lower bound for the $d$-Substring Alignment problem in the streaming model where the string $S$ appears before the string $T$ (rather than in sync).
%, and the goal is again to compute a longest alignment.
%standard streaming model, where the symbols of $S$ and $T$ may not arrive in sync.

\newcommand{\thmlbone}{For $7<d=o(\sqrt{n})$, any randomized $(1+\eps)$-approximation streaming algorithm computing $\ell_{max}$ with success probability at least $1-1/n$, requires $\Omega(d\log n)$ bits of space.}
% (even when the sequence of edits does not need to be output).}
\begin{theorem}\thmlab{thm:lbone}
\thmlbone
\end{theorem}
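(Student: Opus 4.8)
The statement is a streaming lower bound in the model where all of $S$ is presented before all of $T$, so the natural tool is one-way communication complexity. Any streaming algorithm $\myalg{A}$ that uses $s$ bits of space induces a one-way protocol for a two-party problem in which Alice holds the data that determines $S$ and Bob holds the data that determines $T$: Alice runs $\myalg{A}$ on $S$, sends Bob the $s$-bit memory contents, Bob resumes $\myalg{A}$ on $T$ and reads off the output. The transcript has length $s$. Hence it suffices to exhibit a communication problem with one-way randomized complexity $\Omega(d\log n)$ that reduces, through a pair of \emph{local} encodings $z\mapsto S(z)$ and $(\text{Bob's input})\mapsto T(\cdot)$, to $(1+\eps)$-approximating $\ell_{max}$, so that an approximation that is correct with probability $\ge 1-1/n$ solves the communication problem with probability $\ge 1-1/n > 2/3$.

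For the communication primitive I would use Augmented Indexing: Alice holds $z\in\zo^N$, Bob holds $k\in[N]$ together with $z_1,\dots,z_{k-1}$, and must output $z_k$; its one-way (Alice-to-Bob) randomized complexity is $\Omega(N)$ for any error bounded away from $1/2$ (classical, e.g. Miltersen--Nisan--Safra--Wigderson; plain Indexing would also give $\Omega(N)$, but the augmentation makes the reduction cleaner). I set $N=d\cdot m$ with $m=\flr{\log(n/d)}$; since $d=o(\sqrt n)$ we have $m=\Theta(\log n)$, so $N=\Theta(d\log n)$, which will be the final bound once the reduction is in place. Group the $N$ bits into $d$ blocks of length $m$ and read block $i$ as the binary expansion of an integer $a_i\in\{0,\dots,2^m-1\}$; answering the augmented-index query at $k=(j-1)m+b$ is then exactly the task: given $a_1,\dots,a_{j-1}$ and the top $b-1$ bits of $a_j$, decide whether $a_j$ falls in the lower or upper half of the dyadic interval those bits pin down.

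\textbf{The reduction.} Alice maps $a=(a_1,\dots,a_d)$ to $S(a):=g_1g_2\cdots g_d$ padded to length $n$, where $g_i$ is a gadget of length $\T{n/d}$ that \emph{stores the position $a_i$} --- for instance a run of $0$s of length proportional to $a_i$ enclosed by fixed binary markers, so that pushing a candidate alignment through $g_i$ past position $a_i$ starts costing edits. Bob, knowing $j$, $b$, the pinned-down top bits of $a_j$ (equivalently the interval midpoint $P$), and $a_1,\dots,a_{j-1}$ exactly, builds $T$ that reproduces $g_1\cdots g_{j-1}$ verbatim, installs in the $g_j$-region a probe gadget calibrated to the threshold $P$, and then fills the rest. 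The gadgets are designed so that the longest $d$-near-alignment is the intended one: it runs through blocks $1,\dots,j-1$ for free (or at unit edit cost per block --- this, plus the boundary slack, is where the budget $d$ is spent, forcing $d>7$) and then as far into block $j$ as the remaining budget allows; its total length lands above a threshold $\T{n/d}$ when $a_j$ is in the upper half and below half of that in the lower half. That constant-factor gap is resolved by any $(1+\eps)$-approximation with $\eps<1$ (if $\ell_{max}\ge 2c$ versus $\ell_{max}\le c$, the reported value separates the two), so reading the approximate $\ell_{max}$ reveals $z_k$, completing the reduction and the $\Omega(d\log n)$ bound.

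\textbf{Main obstacle.} The delicate part is the gadget analysis: one must prove the optimal near-alignment cannot outsmart the intended one --- it cannot save edits by using insertions/deletions to realign across gadget boundaries or to bypass the calibrated probe, and it cannot find a longer near-alignment living entirely in the filler or straddling Bob's region --- and, crucially, that the separation between the two cases is \emph{multiplicative} (a constant factor), robustly for every $\eps<1$, not merely additive. Verifying the parameter regime is the remaining bookkeeping: $d=o(\sqrt n)$ is exactly what keeps $m=\Theta(\log n)$ and lets all $d$ gadgets fit inside length $n$, while $d$ exceeding a small constant is what affords the per-boundary edits. The communication-to-streaming simulation and invoking the Augmented Indexing bound are routine.
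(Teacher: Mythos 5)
You have the right high-level framework (a space-$s$ streaming algorithm in the ``$S$ before $T$'' model yields an $s$-bit one-way protocol, so it suffices to embed an indexing-type problem of complexity $\Omega(d\log n)$), and this is genuinely different from the paper's route: the paper does not go through Augmented Indexing at all. It first reduces edit distance to Hamming distance by interleaving blocks $\textbf{1}^{d+1}$ between characters (so that any optimal edit sequence is forced to consist only of substitutions, giving $\ed(s(x),s(y))=\HAM(x,y)$), then proves by a direct counting argument over a hard distribution (strings $x$ with $d$ ones, $y$ at Hamming distance $d$ or $d+1$) that fewer than $\frac{d\log n}{3}$ bits of memory forces error probability at least $1/n$, invokes Yao's principle, and finally pads with $t(x)=\textbf{1}^{(d+1)n/2}x\textbf{1}^{(d+1)n/2}$ so that ``$\ed\le d$'' versus ``$\ed>d$'' becomes a factor-$2$ gap in $\ell_{max}$ ($2(d+1)n$ versus at most $(d+1)n$), which any $(1+\eps)$-approximation with $\eps<1$ resolves.

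However, your proposal is not a proof: the entire content --- the gadget construction and the argument that the optimal $d$-near-alignment behaves as intended --- is explicitly deferred as ``the main obstacle,'' and the sketch you do give has a concrete quantitative flaw. You route the intended alignment through blocks $1,\dots,j-1$ (spending roughly one edit per block) and then into block $j$, and claim its length lands above a threshold $\Theta(n/d)$ in one case and below half that in the other. But if the alignment traverses the first $j-1$ blocks, its total length is already $\Theta(jn/d)$ in \emph{both} cases, so the two cases differ only by an additive $\Theta(n/d)$, i.e., a multiplicative factor of $1+\Theta(1/j)$. For $j$ close to $d$ this gap is far too small for a $(1+\eps)$-approximation with constant $\eps$ to detect, so the reduction as sketched fails. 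Repairing it would require restructuring so that the distinguishing region dominates the alignment length (e.g., confining the long alignment to block $j$'s region alone, with Bob destroying all other long alignments), and then actually proving that no rogue alignment in the filler or straddling gadget boundaries beats the intended one. Note how the paper's construction sidesteps exactly this issue: the question it embeds is a single global yes/no (does the whole middle section align within budget $d$?), which automatically yields a clean constant-factor gap.
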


Finally,  we observe that our algorithms can be modified to recognize {\em complementary $d$-near-alignments}, which are objects  relevant to computational biology  arising in pairings of DNA or RNA sequences:
\begin{definition}
Let $f\,:\,\sum\rightarrow\sum$ be a pairing of symbols in the alphabet. A string $S\in\sum^n$ is a complementary alignment if $S[x]=f(T[x])$ for all $1\le x\le n$.
\end{definition}
Indeed, for each character $T[x]$ in $T$, one can feed $f(T[x])$ instead of $T[x]$ to our algorithm in order to find a complementary alignment between $S$ and $T$.

\subsection*{Motivation and related work}

The $d$-Substring Alignment problem is a  restricted variant of the classic \emph{Longest Common Substring} problem, in which the goal is to find a  longest substring common to the given strings $S$ and $T$. It is also related to the {\em Longest Common Subsequence} problem, in which the goal is to find the longest common subsequence  of $S$ and $T$. The offline solutions to these problems involve either suffix trees or dynamic programming \cite{Weiner73, Hui92}. Some of these problems and related string alignment problems have been recently studied in the streaming model (e.g., \cite{Liben-NowellVZ06, SunW07, BerenbrinkEMS14, KociumakaSV14, GawrychowskiMSU16}).

%The classic offline solutions involve either suffix trees, which require $\O{|S|+|T|}$ time \cite{Weiner73,Hui92}, or dynamic programming, which uses $\O{|S||T|}$ time. 

\iffalse
Alignment problems in the streaming model are widely motivated in string processing applications, such as computational biology, where there is a need for efficient sequencing for genes and proteins \cite{ContiCH04, HerbordtMSGC07}. 
%In particular, the streaming model is relevant to contemporary data-sequencing technologies \cite{ContiCH04, HerbordtMSGC07}. 
\fi

Real data is often subject to errors, and hence algorithms that account for ``near''-alignments, rather than just alignments, are important for processing data. The mismatches leading to near-alignments are most relevant to metrics such as Hamming and edit distance. While the Hamming distance only accounts for substitutions, the edit distance accounts for insertions and deletions, in addition to substitution. Therefore, it is often the case that the study of alignment problems in the edit distance is more challenging than in the Hamming distance. 

Alignment problems have sustained interest in the computer science community over many decades (see e.g., book  \cite{ApostolicoG97}).  The edit metric has been recently well-studied in the streaming model, e.g.,  \cite{AndoniGMP13, BackursI15, ChakrabortyGK16a, ChakrabortyGK16, BelazzouguiZ16}), and ``mismatches'' in the Hamming metric have been investigated in  \cite{PoratP09, LeimeisterM14, AluruAT15, FlouriGKU15, CliffordFPSS16, Starikovskaya16, GrigorescuSZ17, ErgunGSZ17}.

\subsection*{Preliminaries and Overview}
We denote the set $\{1, 2, \ldots, n\}$ by $[n]$.  
We assume that two input streams are strings of length $n$ over a finite alphabet $\Sigma$. 
Given a string $S[1,\dots,n]$, we denote its length by $|S|$, its $i\th$ character by $S[i]$ or $S_i$, and the substring between locations $i$ and $j$ (inclusive) by $S[i,j]$.  

The edit or Levenshtein distance between $S$ and $T$, denoted $\ed(S,T)$, is the minimum number of insertions, deletions, or substitutions needed to obtain one string from the other. 
We say $S[i,j]$ and $T[i,j]$ is a $d$-\emph{near-alignment} if $\ed(S[i,j],T[i,j])\le d$. 
A related metric which we use in proving lower bounds is the Hamming distance. 
The Hamming distance between $S$ and $T$, denoted $\HAM(S,T)$ is the number of indices whose symbols do not match: $\HAM(S,T)=\Big|\{ i\mid S[i]\ne T[i]\}\Big|$.

{\bf The approximation algorithms from \thmref{thm:multiply} and \thmref{thm:add}:}
We define a sequence of {\em checkpoints}, such that at each checkpoint we initiate a sketch of the following characters in each of the two streams, $S$ and $T$, so that we can compare the alignments. 
The checkpoints for the one-pass multiplicative algorithm in \thmref{thm:multiply} are dynamically created and maintained to guarantee the $(1+\eps)$-approximation, as in \figref{fig:checkpoints}, while the checkpoints for the one-pass additive algorithm are predefined.

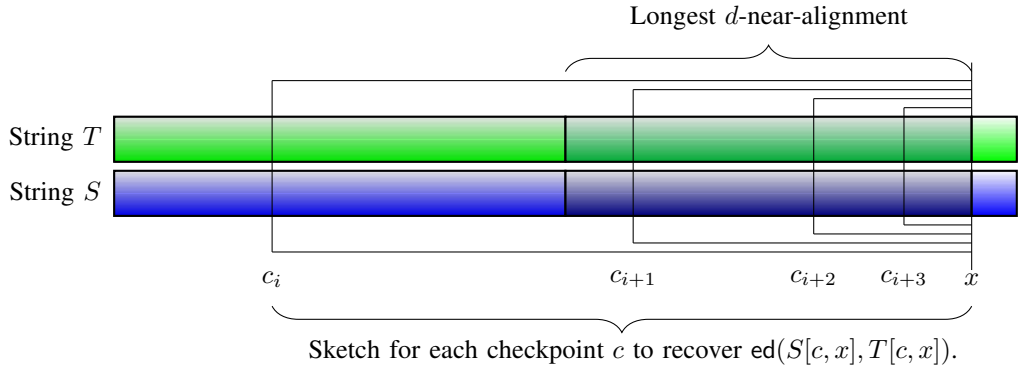
\begin{figure*}[htb]
\centering
\begin{tikzpicture}[scale=0.6]
%streams
\filldraw[thick, top color=white,bottom color=blue] (0cm,0cm) rectangle+(10cm,1cm);
\filldraw[thick, top color=white,bottom color=green] (0cm,1.2cm) rectangle+(10cm,1cm);
\filldraw[thick, top color=white,bottom color=darkblue] (10cm,0cm) rectangle+(9cm,1cm);
\filldraw[thick, top color=white,bottom color=darkpastelgreen] (10cm,1.2cm) rectangle+(9cm,1cm);
\filldraw[thick, top color=white,bottom color=blue] (19cm,0cm) rectangle+(1cm,1cm);
\filldraw[thick, top color=white,bottom color=green] (19cm,1.2cm) rectangle+(1cm,1cm);

\draw (0cm,1.2cm) -- (20cm,1.2cm);
\draw (0cm,2.2cm) -- (20cm,2.2cm);
\draw (0cm,1.2cm) -- (0cm,2.2cm);
\draw (20cm,1.2cm) -- (20cm,2.2cm);

\draw (19cm,-1.2cm) -- (19cm, 3.4cm);

\draw (17.5cm,-0.2cm) -- (17.5cm, 2.4cm);
\draw (17.5cm,-0.2cm) -- (19cm, -0.2cm);
\draw (17.5cm,2.4cm) -- (19cm, 2.4cm);

\draw (15.5cm,-0.4cm) -- (15.5cm, 2.6cm);
\draw (15.5cm,-0.4cm) -- (19cm, -0.4cm);
\draw (15.5cm,2.6cm) -- (19cm, 2.6cm);

\draw (11.5cm,-0.6cm) -- (11.5cm, 2.8cm);
\draw (11.5cm,-0.6cm) -- (19cm, -0.6cm);
\draw (11.5cm,2.8cm) -- (19cm, 2.8cm);

\draw (3.5cm,-0.8cm) -- (3.5cm, 3cm);
\draw (3.5cm,-0.8cm) -- (19cm, -0.8cm);
\draw (3.5cm,3cm) -- (19cm, 3cm);

\node at (19cm, -1.4cm){$x$};
\node at (17.5cm, -1.4cm){$c_{i+3}$};
\node at (15.5cm, -1.4cm){$c_{i+2}$};
\node at (11.5cm, -1.4cm){$c_{i+1}$};
\node at (3.5cm, -1.4cm){$c_i$};
\node at (-1.3cm, 0.5cm){String $S$};
\node at (-1.3cm, 1.7cm){String $T$};

\draw [decorate,decoration={brace,mirror,amplitude=10pt}](19cm,3.2cm) -- (10cm,3.2cm);
\node at (14.5cm, 4.4cm){Longest $d$-near-alignment};
\draw [decorate,decoration={brace,mirror,amplitude=10pt}](3.5cm,-2cm) -- (19cm,-2cm);
\node at (11.5cm, -3cm){Sketch for each checkpoint $c$ to recover $\ed(S[c,x],T[c,x])$.};
\end{tikzpicture}
\caption{Checkpoints spaced to guarantee $(1+\eps)$-approximation.}\figlab{fig:checkpoints}
\end{figure*}

For each checkpoint $c$, we create a sketch of $S[c,x]$, using the data structure from \cite{BelazzouguiZ16}, which uses $\O{d\log n}$ bits of space. This sketch is indeed relevant to the  simultaneous streaming model.

\newcommand{\thmbz}{There exists a data structure in the simultaneous streaming model that computes the edit distance using $\O{d\log n}$ bits of space and $\O{n+d^2}$ processing time. Furthermore, this data structure can be augmented to recover the necessary edit operations, using $\O{d^2\log n}$ bits of space.}
\begin{theorem}\thmlab{thm:bz}\cite{BelazzouguiZ16}
\thmbz
\end{theorem}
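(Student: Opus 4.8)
The plan is to run, inside the simultaneous stream, the classical diagonal dynamic program for bounded edit distance — Ukkonen's banded formulation together with the Landau–Vishkin ``furthest-reaching point'' recurrence — and to replace the one ingredient that normally needs random access to the inputs, namely longest-common-extension (LCE) queries, by incrementally maintained Karp–Rabin fingerprints. If $\ed(S,T)\le d$, every optimal alignment stays within the $2d+1$ diagonals $\{k:|k|\le d\}$ of the matrix $E[i][j]=\ed(S[1,i],T[1,j])$. Writing $f_e(k)$ for the largest row $i$ reachable on diagonal $k=i-j$ using exactly $e$ edits, the recurrence advances $e$ from $0$ to $d$, each step taking the best of an insertion, deletion, or substitution move from $f_{e-1}(\cdot)$ and then extending greedily through the maximal run of matches (the LCE) along that diagonal. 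There are only $O(d^2)$ pairs $(e,k)$, which is the source of the $d^2$ term, and the $n$ term will be the cost of a single synchronized pass that feeds symbols and keeps fingerprints current. The final answer is read off from the largest $e$ with $f_e(0)$ reaching $n$, equivalently from $E[n][n]$.

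First I would fix a processing schedule consistent with the arrival order. Since $S[p]$ and $T[p]$ arrive together but a cell on diagonal $k$ compares $S[i]$ with $T[i-k]$, the lagging string must be delayed; I keep a sliding window holding the $O(d)$ most recent symbols of each of $S$ and $T$, which suffices because any band cell references characters whose indices differ by at most $d$. This window costs $\O{d\log|\Sigma|}$ bits. Alongside it I maintain the active frontier of the band — the $O(d)$ values $f_e(k)$, each bounded by $d$ and hence $O(\log d)$ bits — together with a constant number of rolling prefix fingerprints of $S$ and $T$, each $O(\log n)$ bits. Summing gives $\O{d\log n}$ bits. Done naively, comparing characters one at a time to extend each diagonal, this already yields a correct algorithm in $\O{nd}$ time, since the extension intervals on each of the $O(d)$ diagonals telescope to total length $O(n)$.

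The main obstacle is shaving this down to the claimed $\O{n+d^2}$, which forbids ever rescanning a matched run: each of the $O(d^2)$ extensions must be resolved in amortized $O(1)$, so that only the single pass pays the $O(n)$. Offline, Landau–Vishkin obtains $O(1)$ LCE from a suffix tree with lowest-common-ancestor queries, but storing such a structure — or even all prefix fingerprints needed to hash arbitrary substrings — costs $\Omega(n)$ space, which we cannot afford. The resolution, and the genuinely delicate part of the argument, is a fingerprint-based LCE oracle that lets the wavefront jump over a matched run without reading it: an extension on diagonal $k$ starting at a frontier row is compared, via exponential-then-binary search over run lengths, against the offset stream using the maintained fingerprints, so each jump costs $O(\log n)$ work that is charged to the $O(d^2)$ edit transitions rather than to the matched length. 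Two points need care: arguing that at most $O(d)$ diagonals are ever simultaneously active, so the anchored fingerprints stay within $\O{d\log n}$ bits and never reference a symbol outside the $O(d)$-symbol window (positions on distinct active diagonals differ by at most the current budget $e\le d$); and the usual polynomial-identity correctness guarantee — over a field of size $\mathrm{poly}(n)$ a spurious collision on any of the $\mathrm{poly}(n)$ comparisons occurs with probability $1/\mathrm{poly}(n)$, so a union bound makes the reported distance exact with high probability.

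Finally, to augment the structure so that it reports the edit operations and not merely $\ell_{max}$, I would store at each of the $O(d^2)$ evaluated pairs $(e,k)$ a back-pointer recording which move (insert, delete, substitute, or a match-run of a given length) produced $f_e(k)$; each pointer needs $O(\log n)$ bits, for $\O{d^2\log n}$ bits in total, matching the stated bound. Tracing these pointers backward from the terminal cell $E[n][n]$ reconstructs an optimal edit script in $O(d^2)$ additional time. I expect the two most error-prone pieces to be the synchronization bookkeeping that aligns the sliding window with the diagonal indices, and the amortized accounting that keeps the per-symbol work $O(1)$ while charging the $O(d^2)$ transitions and their LCE jumps separately.
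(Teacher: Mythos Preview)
The paper does not itself prove this statement: it is quoted from \cite{BelazzouguiZ16}, and the only argument the paper supplies is the short sketch in the preliminaries. That sketch says the BZ structure tracks the $2d+1$ diagonals of the alignment matrix and answers longest-common-prefix queries via a \emph{suffix tree} maintained over the stream, using the invariant that adjacent diagonals differ in score by at most $1$. Your high-level skeleton (Ukkonen band, Landau--Vishkin wavefront $f_e(k)$, $\O{d}$ active diagonals, back-pointers for the $\O{d^2\log n}$ recovery) is consistent with that outline.

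Where your proposal diverges, and where there is a genuine gap, is in swapping the suffix-tree LCE oracle for Karp--Rabin fingerprints with exponential-then-binary search. Three concrete problems arise. First, by your own accounting each of the $\O{d^2}$ jumps costs $\O{\log n}$, so you have argued $\O{n+d^2\log n}$ time, not the claimed $\O{n+d^2}$. Second, exponential-then-binary search on run length is incompatible with a single forward pass: after the exponential probe overshoots to position $i+2^t$, the binary-search step needs prefix fingerprints at intermediate positions $i+m$ that you have already streamed past and did not store; your ``$\O{d}$-symbol window'' bound controls only the \emph{starting} rows of the active diagonals, not the probe endpoints, and storing all intermediate prefix fingerprints is $\Theta(n\log n)$ bits. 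Third, fingerprints make the sketch Monte Carlo, whereas the BZ construction as described (suffix tree, deterministic LCE in $O(1)$) is deterministic; this matters because the paper later contrasts the BZ sketch with its own deterministic main algorithm.

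The substantive content of \cite{BelazzouguiZ16} is precisely how to maintain, in $\O{d\log n}$ bits and one pass, a deterministic structure supporting $O(1)$ LCE for the $\O{d^2}$ wavefront queries; your sketch correctly identifies this as the crux but does not supply a mechanism that meets both the space and the time bound.
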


Upon reading $S[x]$ and $T[x]$, for each checkpoint $c$ we compare the sketches of $S[c,x]$ and $T[c,x]$ using \cite{BelazzouguiZ16} (\thmref{thm:bz}).
If the edit distance is greater than $d$, we discard the sketches. 
Otherwise, we compare $x-c+1$ to our estimate of the length of the longest $d$-near-alignment and proceed with the stream. 
We give further details about how the structure updates from $S[c,x]$ to $S[c,x+1]$ shortly.

To obtain the additive approximation guaranteed by the one-pass algorithm in \thmref{thm:add}, we modify our checkpoints, so that they appear in every $E$ positions. 
Hence, the longest $d$-near-alignment contains a checkpoint within $E$ positions of the its first position, and the algorithm will recover a $d$-near-alignment with length at least $\ell_{max}-E$.

For the sake of completeness, we now briefly describe the {\it Belazzougui-Zhang (BZ) Sketch} \cite{BelazzouguiZ16} (\thmref{thm:bz}) mentioned above.
Recall that the edit distance between two strings in the classic offline model can be solved through dynamic programming, such as in the Needleman-Wunsch and Wagner-Fischer algorithms \cite{Vintsiuk68, NeedlemanW70, Sankoff72, Sellers74, WagnerF74}.
The dynamic programming solution involves creating an alignment matrix, namely an $n\times n$ matrix whose $ij$\th entry contains the value of $\ed(S[1,i],T[1,j])$, called the \emph{score} of that entry. 
The BZ data structure outputs $\ed(S[1,x],T[1,x])$ by keeping a sketch of the alignment matrix, size $\O{d\log n}$, as well as some additional information to mimic the recursive solution in the offline model. 
Upon seeing $S[x+1]$ and $T[x+1]$, it updates the sketch by performing the same recursion as the classic offline dynamic programming solution.

Specifically, the BZ sketch notes that for aligned strings with edit distance at most $d$, at most $2d+1$ diagonals need to be considered, as in \figref{fig:matrix}. 
The sketch maintains a key invariant: the scores of any two adjacent diagonals can differ by at most 1.

\begin{figure*}[htb]
\centering
\begin{tikzpicture}[scale=1]

\draw [->] (0cm,0cm) -- (4.4cm,0cm);
\draw [->] (0cm,0cm) -- (0cm,4.4cm);
\draw (4cm,0cm) -- (4cm,4cm);
\draw (0cm,4cm) -- (4cm,4cm);

\draw [->] (0cm,0cm) -- (4cm,4cm);
\draw [->] (0cm,0.4cm) -- (3.6cm,4cm);
\draw [->] (0cm,0.8cm) -- (3.2cm,4cm);
\draw [->] (0cm,1.2cm) -- (2.8cm,4cm);
\draw [->] (0.4cm,0cm) -- (4cm,3.6cm);
\draw [->] (0.8cm,0cm) -- (4cm,3.2cm);
\draw [->] (1.2cm,0cm) -- (4cm,2.8cm);

\draw (0cm,2cm) -- (4cm,2cm);
\draw (2cm,0cm) -- (2cm,4cm);

\node at (0.1cm, -0.6cm){$S[1]$};
\node at (-0.6cm, 0.1cm){$T[1]$};
\node at (2cm, -0.6cm){$S[d]$};
\node at (-0.6cm, 2cm){$T[d]$};
\node at (4cm, -0.6cm){$S[2d]$};
\node at (-0.6cm, 4cm){$T[2d]$};
\end{tikzpicture}
\caption{The BZ sketch mimics dynamic programming (essentially Figure 4 in \cite{BelazzouguiZ16})}\figlab{fig:matrix}
\end{figure*}
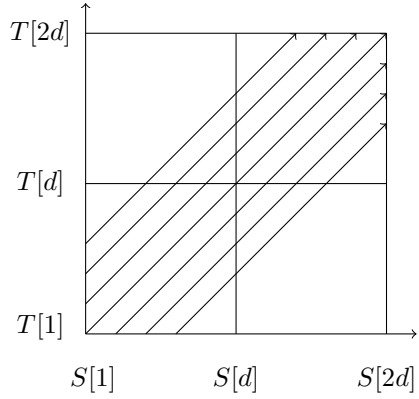

The algorithm maintains a suffix tree that allows computation of the longest common prefix of suffixes of $S$ and $T$.  
Thus, the algorithm updates the score for each diagonal by mimicking dynamic programming, based on the scores of the adjacent diagonals, information from the suffix tree, as well as additional information on the location of the most recent edit operation in each diagonal.

\iffalse
To obtain the two-pass {\em exact} algorithm in \thmref{thm:exact}, we modify the additive error algorithm so that it returns a list $\mathcal{L}$ of candidates for the longest $d$-near-alignment after the first pass. 
Additionally, we identify the number of edit operations needed to align the $\sqrt{n}$ window before each of the candidates. 
Thus, in the second pass, we keep a sketch of each of the alignments beginning with the $\sqrt{n}$ characters before the starting positions of long $d$-near-alignments identified in the first pass. 
This allows us to exactly identify the longest $d$-near-alignment during the second pass.
\fi

{\bf Our one-pass exact algorithm in \thmref{thm:main}} bypasses the use of the BZ sketch from \cite{BelazzouguiZ16}, to obtain improved space guarantees. Indeed, while one may use the BZ sketch here too for $\O{d^2\log n}$ bits of space, our algorithm uses $\O{d^2+d\log n}$ bits of space.

Our approach is based on a couple of important observations.
First, no character in $S$ may be aligned to a character in $T$ that is at least $d+1$ positions away.
Thus, if there exist $d+1$ consecutive positions in $S$ that are aligned to $d+1$ consecutive positions in $T$, then we only need to keep the locations of the $d$ most recent edit operations before this region. 
Therefore, any $(d+1)^2$ sequence of consecutive positions either contains such a region (where $d+1$ consecutive positions in $S$ are aligned to $d+1$ consecutive positions in $T$), or requires at least $d$ edit operations in order to be aligned.
The algorithm maintains a sliding window of size $(d+1)^2$ as well as the locations of the $d$ most recent edit operations, allowing recovery of the longest $d$-near-alignment.

However, straightforward recovery of the edit operations in the sliding window using a BZ sketch takes $\O{d^2\log n}$ bits. 
To improve on this space complexity, we modify the classical Hirschberg's algorithm \cite{Hirschberg75}. 
Recall that  Hirschberg algorithm is a dynamic programming algorithm that finds the optimal sequence alignment between two strings of length $n$ using $\O{n\log n}$ bits of space.
It uses divide-and-conquer to split each string into two substrings, and recursively compares the optimal sequence alignment between the corresponding substrings. 
We use the algorithm here on the sliding window of length $\O{d^2}$, but because we are only interested in finding alignments with edit distance at most $d$, we can allow the Hirschberg algorithm to throw away any alignments with edit distance more than $d$. 
This modification, detailed in the proof of \thmref{thm:mha}, produces an algorithm that uses $\O{d^2+d\log n}$ bits of space. %instead of $\O{d^2\log n}$ bits of space.

%The lower bound from \thmref{thm:lbtwo} follows from 
{\bf The lower bounds} Finally, to  show the lower bound from \thmref{thm:lbone} we construct distributions for which any deterministic algorithm fails with significant probability unless given a certain amount of space, and then apply Yao's principle. 
We first reduce the problem of approximating the longest $d$-near-alignment under the edit distance to the problem of approximating longest $d$-near-alignment under the Hamming distance. 
We then reduce the problem to exactly identifying whether two strings have Hamming distance at most $d$. 
We construct hard distributions, and show via counting arguments that deterministic algorithms using  ``low'' amounts of space fail  on inputs from these distributions. 
%root = near-alignment.tex
\section{The Multiplicative Approximation Algorithm}
In this section, we prove \thmref{thm:multiply}, giving a $\O{\frac{d\log^2 n}{\eps\log(1+\eps)}}$ space, one-pass streaming algorithm with multiplicative approximation $(1+\eps)$ to the length of the longest $d$-near-alignment under the edit distance.
Furthermore, the algorithm uses $\O{\frac{(nd+d^3)\log^2 n}{\eps\log(1+\eps)}}$ update time per arriving symbol.  

Prior to the stream, we initialize the list of checkpoints $\mathcal{C}$ to be the empty set, and $\tilde{\ell}$ (the current estimate of the length of the longest $d$-near-alignment) to be zero. 
We dynamically create and maintain the checkpoints to guarantee the $(1+\eps)$-approximation. 
At each checkpoint, we initiate a BZ sketch for each of the two streams, $S$ and $T$, so that we can compare the alignments. 
We also set $c_{start}$, the beginning position of the returned $d$-near-alignment, to be zero. 
The algorithm in full appears below.
\begin{mdframed}
Maintenance:
\begin{enumerate}
\item
Read $S[x], T[x]$.
\item
For each checkpoint $c\in\mathcal{C}$, update the sketches of $\ed(S[c,x-1],T[c,x-1])$ to $\ed(S[c,x],T[c,x])$ respectively.
\item
For all $k\ge k_0$:
\begin{enumerate}
\item
If $x$ is a multiple of $\flr{\alpha(1+\alpha)^{k-2}}$, where $\alpha=\sqrt{1+\eps}-1$. then add the checkpoint $c=x$ to $\mathcal{C}$. Set $\textsf{level}(c)=k$.
\item
If there exists a checkpoint $c$ with $\textsf{level}(c)=k$ and $c<x-2(1+\alpha)^k$, then delete $c$ from $\mathcal{C}$.
\end{enumerate}
\item
For every checkpoint $c\in\mathcal{C}$ such that $x-c+1>\tilde{\ell}$, check if $S[c,x]$ and $T[c,x]$ are $d$-near-alignments. 
If $S[c,x]$ and $T[c,x]$ are $d$-near-alignments, then set $c_{start}=c$, $\tilde{\ell}=x-c+1$.
\item
If $x=n$, then report $c_{start}$ and $\tilde{\ell}$.
\end{enumerate}
\end{mdframed}

Because the checkpoints are spaced as the same as \cite{BerenbrinkEMS14}, then the following properties hold:

\begin{observation}\cite{BerenbrinkEMS14}\obslab{obs:checkpoints}
At reading $S[x]$, for all $k\ge k_0=\ceil{\frac{\log\left(\frac{(1+\alpha)^2}{\alpha}\right)}{\log(1+\alpha)}}$, let $C_{x,k}=\{c\in\mathcal{C}\,|\,\textsf{level}(c)=k\}$.
\begin{enumerate}
\item
$C_{x,k}\subseteq[x-2(1+\alpha)^k,x]$.
\item
The distance between two consecutive checkpoints of $C_{x,k}$ is $\flr{\alpha(1+\alpha)^{k-2}}$.
\item
$|C_{x,k}|=\ceil{\frac{2(1+\alpha)^k}{\flr{\alpha(1+\alpha)^{k-2}}}}$.
\item 
At any point in the algorithm, the number of checkpoints is $\O{\frac{\log n}{\eps\log(1+\eps)}}$.
\end{enumerate}
\end{observation}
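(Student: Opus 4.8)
The plan is to derive all four claims directly from the two maintenance rules 3(a)--3(b) in the boxed algorithm, since the set $\mathcal{C}$ of checkpoints, together with $\textsf{level}(\cdot)$, is entirely determined by them. Fix a level $k\ge k_0$ and abbreviate $m_k:=\flr{\alpha(1+\alpha)^{k-2}}$. First I would observe that the choice $k_0=\ceil{\log((1+\alpha)^2/\alpha)/\log(1+\alpha)}$ is exactly what forces $\alpha(1+\alpha)^{k_0-2}\ge 1$, hence $m_k\ge 1$ for every $k\ge k_0$, so rule 3(a) is well defined. Rule 3(a) then creates a level-$k$ checkpoint at precisely the positive multiples of $m_k$, while rule 3(b) deletes a level-$k$ checkpoint $c$ exactly at the first index $x$ with $x>c+2(1+\alpha)^k$. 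Since the deletion threshold $x-2(1+\alpha)^k$ is nondecreasing in $x$, a level-$k$ checkpoint $c$ is present when $S[x]$ is read if and only if it has been created ($c$ a multiple of $m_k$ with $c\le x$) and not yet deleted ($c\ge x-2(1+\alpha)^k$); equivalently, $C_{x,k}$ is exactly the set of multiples of $m_k$ lying in $[\max\{m_k,\,x-2(1+\alpha)^k\},\,x]$.

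Items (1)--(3) then fall out of this description. Item (1) is immediate since $C_{x,k}\subseteq[x-2(1+\alpha)^k,x]$. For item (2), survival of a level-$k$ checkpoint is monotone in its position --- if $c$ survives at time $x$, so does every multiple of $m_k$ in $[c,x]$ --- so $C_{x,k}$ is a contiguous run of multiples of $m_k$, and consecutive elements differ by exactly $m_k=\flr{\alpha(1+\alpha)^{k-2}}$. Item (3) follows by counting how many multiples of $m_k$ fit inside the window of length $2(1+\alpha)^k$ from item (1).

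For item (4) I would combine a per-level bound with a bound on the number of nonempty levels. Per level, item (3) gives $|C_{x,k}|=\O{(1+\alpha)^k/m_k}$; using $\flr{y}\ge y/2$ for $y\ge 1$ we get $m_k\ge\max\{1,\,\alpha(1+\alpha)^{k-2}/2\}$ for all $k\ge k_0$, hence $|C_{x,k}|=\O{(1+\alpha)^2/\alpha}=\O{(1+\eps)/\alpha}=\O{1/\eps}$, the last equality using $\alpha=\sqrt{1+\eps}-1=\Theta(\eps)$ for $\eps\le 1$. For the number of levels, a level-$k$ checkpoint exists only if some multiple of $m_k$ is at most $n$, i.e.\ $m_k\le n$, which bounds $k$ by $k_0+\O{\log_{1+\alpha}n}$; combined with $\log(1+\alpha)=\tfrac12\log(1+\eps)$ (and the mild assumption $\eps\ge 1/\mathrm{poly}(n)$, so that $\log(1/\alpha)=\O{\log n}$) this is $\O{\log n/\log(1+\eps)}$ levels. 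Summing $|C_{x,k}|$ over them --- which at worst over-counts a position that is simultaneously a checkpoint at several levels --- yields $\O{\frac{\log n}{\eps\log(1+\eps)}}$.

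The arguments for items (1)--(3) are routine bookkeeping from the two maintenance rules; the one place calling for care is item (4), where the per-level estimate must be controlled against the floor in $m_k$ --- in particular the smallest levels, where $m_k$ may be as small as $1$, should be handled separately from large $k$ --- and where the $\alpha$-parametrization of the construction must be reconciled with the $\eps$-parametrization of the statement. Since the checkpoint scheme and precisely these counting facts are those of \cite{BerenbrinkEMS14}, I would carry out the $k_0$ and window computations explicitly and defer the remaining arithmetic to that reference.
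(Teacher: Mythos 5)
Your derivation is correct, but there is nothing in the paper to compare it against: the authors state this observation as imported verbatim from \cite{BerenbrinkEMS14} (``Because the checkpoints are spaced as the same as \cite{BerenbrinkEMS14}, then the following properties hold'') and give no proof. Your reconstruction from maintenance rules 3(a)--3(b) is the natural one and all the key steps check out: the choice of $k_0$ does guarantee $\flr{\alpha(1+\alpha)^{k-2}}\ge 1$ for $k\ge k_0$; monotonicity of the deletion threshold does make $C_{x,k}$ a contiguous run of multiples of $m_k$, giving items (1) and (2); and your two-factor bound for item (4) --- $\O{(1+\alpha)^2/\alpha}=\O{1/\eps}$ checkpoints per level times $\O{\log n/\log(1+\eps)}$ levels --- is exactly the right accounting, including the observation that $\alpha=\sqrt{1+\eps}-1=\Theta(\eps)$ for $\eps\le 1$ is needed to translate between the $\alpha$- and $\eps$-parametrizations. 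Two minor caveats, neither fatal: the exact equality claimed in item (3) is really only correct up to $\pm 1$ depending on how the endpoints of the window $[x-2(1+\alpha)^k,x]$ align with multiples of $m_k$ (the observation as stated glosses over this, and it is immaterial for item (4)); and your assumption $\eps\ge 1/\mathrm{poly}(n)$ for bounding the number of levels is indeed needed and is implicit in the paper's space bound, so it is worth flagging as you did. Since the paper treats this as a black box, a self-contained argument like yours is strictly more informative than what the paper provides.
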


\begin{proofof}{\thmref{thm:multiply}}
Let $\ell_{max}$ be the length of the longest $d$-near-alignment, between indices $i_{max}$ and $j_{max}$. 
Let $k$ be the largest integer so that $2(1+\alpha)^{k-1}<\ell_{max}$, where $\alpha=\sqrt{1+\eps}-1$. 
Therefore, $j_{max}-2(1+\alpha)^{k-1}>i_{max}$. 

By \obsref{obs:checkpoints}, there exists a checkpoint in the interval $[j_{max}-2(1+\alpha)^{k-1},j_{max}]$. 
Moreover, \obsref{obs:checkpoints} also implies that consecutive checkpoints of level $k-1$ are separated by distance $\flr{\alpha(1+\alpha)^{k-2}}$. 
Thus, there exists a checkpoint $c$ in the interval $\left[j_{max}-2(1+\alpha)^{k-1},j_{max}-2(1+\alpha)^{k-1}+\alpha(1+\alpha)^{k-3}\right]$. 
Hence, the output $\tilde{\ell}$ of the algorithm is at least $2(1+\alpha)^{k-1}-\alpha(1+\alpha)^{k-3}$.
Thus, the output of the algorithm satisfies the approximation guarantee
\[\frac{\ell_{max}}{\tilde{\ell}}\le\frac{2(1+\alpha)^k}{2(1+\alpha)^{k-1}-2\alpha(1+\alpha)^{k-3}}\]
\[=\frac{(1+\alpha)^3}{(1+\alpha)^2-\alpha}\le(1+\alpha)^2=1+\eps.\]

Since there are at most $\frac{\log n}{\eps\log(1+\eps)}$ checkpoints at any point, and each sketch $S[c_i,x]$ uses $\O{d\log n}$ space, then the total space used is $\O{\frac{d\log^2 n}{\eps\log(1+\eps)}}$. 
As each sketch requires $\O{n+d^2}$ update time, the total update time is $\O{\frac{(nd+d^3)\log^2 n}{\eps\log(1+\eps)}}$.  
\end{proofof}
%root = near-alignment.tex
\section{The Additive Approximation Algorithm}
\seclab{sec:add}
In this section, we prove \thmref{thm:add}, giving a $\O{\left(\frac{n}{E}\right)d\log n}$ space, one-pass streaming algorithm returning the length of the longest $d$-near-alignment under the edit distance, with additive error at most $E$.
Unlike the previous algorithm that uses a series of dynamic checkpoints, this algorithm creates and maintains a checkpoint for every multiple of $E$. 
Again, the checkpoints ``sandwich'' the longest $d$-near-alignment within an additive window of size $E$. 
Before the stream begins, we initialize $\tilde{\ell}$, the current estimate of the length of the longest $d$-near-alignment to be zero and $c_{start}$, the beginning position of the returned $d$-near-alignment, to be zero. 
Then upon seeing characters $S[x]$ and $T[x]$ in the stream:
\begin{mdframed}
Maintenance:
\begin{enumerate}
\item
Read $S[x], T[x]$. 
\item
For each checkpoint $c$, update the sketches of $\ed(S[c,x-1],T[c,x-1])$ to

 $\ed(S[c,x],T[c,x])$,  respectively.
\item
If $x$ is a multiple of $E$, then add the checkpoint $c=x$ to $\mathcal{C}$.
\item
For every checkpoint $c\in\mathcal{C}$ such that $x-c+1>\tilde{\ell}$, we check if $S[c,x]$ and $T[c,x]$ are $d$-near-alignments. 
If $S[c,x]$ and $T[c,x]$ are $d$-near-alignments, then set $c_{start}=c$, $\tilde{\ell}=x-c+1$.
\item
If $x=n$, then report $c_{start}$ and $\tilde{\ell}$.
\end{enumerate}
\end{mdframed}
We now show correctness of \thmref{thm:add}.
\begin{proofof}{\thmref{thm:add}}
Let $\ell_{max}$ be the length of the longest $d$-near-alignment, between indices $i_{max}$ and $j_{max}$. 
If $j_{max}-i_{max}\le E$, then the result holds trivially. 
Otherwise, $i_{max}+E<j_{max}$ and there exists a checkpoint in the interval $[i_{max},i_{max}+E]$, since the checkpoints are spaced distance $E$ apart. 
From the correctness of the BZ sketch, the checkpoint will find a $d$-near-alignment, and so the output of the algorithm will be at least $j_{max}-(i_{max}+E)+1\ge\ell_{max}-E$. 
Thus, the correctness of the algorithm follows.

Since we keep a sketch for each multiple of $E$, there are $\frac{n}{E}$ checkpoints. 
Each sketch is of size $\O{d\log n}$ bits, so the total space used is $\O{\left(\frac{n}{E}\right)d\log n}$. 
\end{proofof}
%root = near-alignment.tex
\section{The Longest $d$-Near-Alignment Algorithm}
In this section, we present a one-pass streaming algorithm that returns the longest $d$-near-alignment with space $\O{d^2+d\log n}$ bits, thus proving \thmref{thm:main}. We emphasize that the algorithm is deterministic.
%Even though the algorithm is sublinear in space, it is deterministic with no probability of failure. 

The idea is to distinguish between the following two cases: either all edit operations corresponding to the longest $d$-near-alignment are close to each other, or there is at least one pair of consecutive edit operations that are at least $d$ indices apart. 
We show that if the second case holds, so that there is at least one pair of consecutive edit operations that are at least $d$ indices apart, it suffices to keep the locations of the $d$ most recent edit operations before this region. 
To this end, our algorithm stores the information of the optimal alignment for the region of the input before a long-enough gap of identical substrings, along with all the characters in a sliding window of a length at most $(d+1)^2$.	

Consider a sliding window beginning at some position $b$ and ending with the most recent position, $x$. 
We enforce an invariant for this window: the edit operations corresponding to the optimal alignment within this window are always at most $d$ positions apart from each other.
We ultimately show in \lemmref{lem:window} that this property ensures the sliding window has size at most $(d+1)^2$.

However, na{\"i}vely recovering the edit operations in the sliding window takes $\O{d^2\log n}$ bits. 
Thus, we detail modifications of the classical Hirschberg algorithm, called procedure $\mha$, in \thmref{thm:mha} to guarantee $\O{d^2+d\log n}$ space. 
While the classical Hirschberg algorithm is a dynamic programming algorithm that finds the optimal sequence alignment between two strings, the promise that our alignment contains at most $d$ edits allow us to greatly narrow the search space.

Let $\mathcal{A}$ denote the set of the most recent $d$ edit operations corresponding to the optimal alignment between for $S[0,b]$ and $T[0,b]$.

In summary, the algorithm stores the following data:
\begin{itemize}
\item The indices $b$ and $x$.
\item The characters of $S[b,x]$ and $T[b,x]$.
\item The set of at most $d$ edit operations $\mathcal{A}$, in a queue data structure.
\item The information of the longest $d$-near-alignment found so far, namely:
\begin{itemize}
\item $i_s$, $j_s$: the two ends of the $d$-near-alignment, so that $\ell=j_s-i_s+1$ is the length of the longest $d$-near-alignment
\item $\mathcal{L}$: the set of edit operations.  
\end{itemize}
\end{itemize}
\begin{mdframed}
Maintenance:
\begin{enumerate}
\item
Read $S[x], T[x]$. 
\item
Construct the optimal alignment between $S[b,x]$ and $T[b,x]$ using $\mha$ algorithm. If there exist $d+1$ consecutive positions in $S$ that are aligned to $d+1$ consecutive positions in $T$, i.e., $S[i_1,j_1]=T[i_2,j_2]$ with $j_1-i_1=j_2-i_2>d$, then 
\begin{enumerate}
\item Append the at most $d$ latest edit operations corresponding to indices before $i_1$ and $i_2$ to $\mathcal{A}$ from the optimal alignment between $S[b,x]$ and  $T[b,x]$.
\item Remove earlier operations from $\mathcal{A}$, until $|\mathcal{A}|\leq d$. 
\item Update $b=\min\{j_1,j_2\}$.
\end{enumerate}
\item Identify whether $\ed(S[b,x],T[b,x])$ is greater than $d$ using $\mha$ algorithm. 
\item If $\ed(S[b,x],T[b,x])>d$, then define $c$ to be the smallest index in $[b,x]$ such that $\ed(S[c,x],T[c,x])\leq d$. Note that $c$ is also computable with $\mha$ algorithm. 
\item Else if $\ed(S[b,x],T[b,x])\le d$, let $f=\ed(S[b,x],T[b,x])$, and define $c$ be the index of $(d-f)$\th operation from the end in $\mathcal{A}$. 
\item In either case, check if  $x-c+1>\ell$, then update $i_s,j_s,\ell,\mathcal{L}$ accordingly.
\item
If $x=n$, then report $\mathcal{L}$ and $\ell$.
\end{enumerate}
\end{mdframed}

\newcommand{\thmmha}{
Given two strings $S$ and $T$ of length $m$ and a parameter $d$, there exists an algorithm that either states that $\ed(S,T)>d$ or recovers the locations of the edit operations if $\ed(S,T)\le d$, using $\O{m+d\log m}$ space and $\O{md\log m}$ time.
}
\begin{theorem}[$\mha$]\thmlab{thm:mha}
\thmmha
\end{theorem}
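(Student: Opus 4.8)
The plan is to run Hirschberg's divide-and-conquer alignment algorithm, but to exploit the promise $\ed(S,T)\le d$ by restricting every dynamic-programming computation to the band of diagonals within distance $d$ of the main diagonal (à la Ukkonen). Concretely, since $\ed(S[1,i],T[1,j])\ge|i-j|$, any alignment of cost at most $d$ touches only cells $(i,j)$ of the alignment matrix with $|i-j|\le d$, so when Hirschberg's procedure runs its forward row-DP on the top half and its backward row-DP on the bottom half it needs to fill only the $\O{d}$ band entries of each row. First I would compute the banded forward DP down to the last row; the entry at $(m,m)$ always lies on the main diagonal, hence in the band, so if it exceeds $d$ we report $\ed(S,T)>d$ and stop. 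Otherwise I recover the alignment exactly as in Hirschberg: take the midpoint $\flr{m/2}$ of $S$, read off from the forward pass the values $\ed(S[1,\flr{m/2}],T[1,j])$ and from a symmetric banded backward pass the values $\ed(S[\flr{m/2}+1,m],T[j+1,m])$ for all $j$ in the band, choose $j^{\ast}$ minimizing the sum, emit the part of the optimal alignment that crosses column $j^{\ast}$ at row $\flr{m/2}$, and recurse on $(S[1,\flr{m/2}],T[1,j^{\ast}])$ and $(S[\flr{m/2}+1,m],T[j^{\ast}+1,m])$, handling $S$-parts of length $\le 1$ directly.

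For the space bound: storing $S$ and $T$ costs $\O{m}$ bits; at any moment only one ``band row'' of the DP is live, holding $\O{d}$ values each at most $d$, i.e.\ $\O{d\log m}$ bits; the recursion stack has depth $\O{\log m}$ and each frame stores only the four boundary indices of its subproblem, so the stack is $\O{\log^2 m}=\O{m}$ bits; and the emitted alignment consists of at most $d$ edit-operation locations, $\O{d\log m}$ bits. Summing gives $\O{m+d\log m}$. For the time bound: a single banded forward or backward pass over a subproblem whose $S$-part has length $m'$ does $\O{d}$ work on each of its $\O{m'}$ rows, hence $\O{m'd}$ work; at any fixed recursion depth the $S$-parts of the active subproblems partition $[1,m]$, so the work at that depth is $\O{md}$, and over the $\O{\log m}$ depths the total is $\O{md\log m}$.

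Correctness reduces to two points, which together form the main obstacle. First, the banding must remain valid not only at the top level but inside every recursive subproblem: one checks that in each subproblem the two string lengths differ by at most $d$ — otherwise that subproblem already has edit distance greater than $d$, contradicting the fact that its optimal alignment is a piece of a global alignment of cost $\le d$ — so the band re-centered on the subproblem's diagonal stays $\O{d}$ wide throughout, and the per-row $\O{d}$ accounting above is justified. Second, Hirschberg's split-point invariant is unaffected by the restriction: there is an optimal alignment of each subproblem that passes through row $\flr{m'/2}$ at the chosen column $j^{\ast}$, so concatenating the recursively produced operation lists yields a valid alignment of $S$ and $T$ whose total cost equals $\ed(S,T)$; the promise then guarantees this list has length at most $d$, matching the claimed output and completing the argument. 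I expect the delicate part to be the bookkeeping that pins down the exact band endpoints in each subproblem (so that the ``$>d$'' abort is triggered precisely when appropriate) and the verification that gluing the sub-lists gives a globally consistent set of edit-operation positions.
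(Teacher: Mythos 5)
Your proposal is correct and follows essentially the same route as the paper: Hirschberg's divide-and-conquer combined with Ukkonen's banding of the dynamic program to the $\O{d}$ diagonals around the main diagonal, aborting with ``$\ed(S,T)>d$'' when the band is exceeded, and the same $T(m)=\O{md}+T(q)+T(m-q)$ recursion for the $\O{md\log m}$ time bound. Your version is in fact somewhat more explicit than the paper's (e.g., checking the final banded cell $(m,m)$ to trigger the abort, and verifying that the band remains valid inside each recursive subproblem), but the underlying argument is the same.
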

\begin{proof}
The classic Hirschberg's algorithm \cite{Hirschberg75, kleinberg2006algorithm} returns the locations of the optimal edit operations between $S$ and $T$ in $\O{m\log m}$ space.
However, if we do not care about the locations of the edit operations for $\ed(S,T)>d$, then we can optimize the space down to $\O{m+d\log m}$ bits using ideas from \cite{Ukkonen85}. 

In the classic Hirschberg algorithm, the edit distance is computed for multiple alignments. 
Specifically, the entry $ij$ in the dynamic programming lookup table contains the edit distance between the substrings $S[1,i]$ and $T[1,j]$.
However, if $|j-i|>d$, then the edit distance between $S[1,i]$ and $T[1,j]$ is greater than $d$. 
Therefore, at each level of Hirschberg's algorithm, we only keep $2d-1$ cells around the main diagonal (a similar idea is used for the BZ sketch in \figref{fig:matrix}). 
If $\ed(S,T)>d$, then some optimal edit operation will appear outside of the cells that we keep. 
Thus, the algorithm recognizes that it cannot recover the optimal operations, and instead declares $\ed(S,T)>d$.  
Hence, if $\ed(S,T)\le d$, the algorithm will return the locations of the optimal edit operations, whereas if $\ed(S,T)>d$, the algorithm outputs $\ed(S,T)>d$. 
Since each cell contains a score using $\log m$ bits, the total space used is $\O{m+d\log m}$. 

Recall that Hirschberg's algorithm uses a divide-and-conquer approach, splitting the dynamic programming table into two subproblems, roughly of equal size, say $q$ and $m-q$, where $\left|q-\frac{m}{2}\right|\le d$.
At each level, with input size $m'$, the algorithm takes $\O{m'd}$ time. 
Hence, the algorithm satisfies the recursion $T(m)=\O{md}+T(m-q)+T(q)$ so that the overall running time is $\O{md\log m}$.
\end{proof}

\newcommand{\lemwin}{Let $x,y\in \Sigma^h$ be two strings of length $h$.
Also let $\mathcal{A}$ be the set of all edit operations corresponding to the optimal alignment between $x$ and $y$. 
If $e$ is the maximum distance between two operations among all consecutive operations in $\mathcal{A}$, then we have: $h\leq (|\mathcal{A}|+1)(e+1)$. }
\begin{lemma} \lemmlab{lem:window}
\lemwin
\end{lemma}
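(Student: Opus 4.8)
The plan is to bound the length $h$ by partitioning the two aligned strings according to the positions of the edit operations in $\mathcal{A}$, and arguing that between two consecutive operations (and before the first / after the last) there can be at most $e+1$ aligned characters, since those characters form a matched, edit-free block and the gap between consecutive operations is at most $e$ by hypothesis. First I would set up notation: write $|\mathcal{A}| = m$ and let the edit operations, read in left-to-right order along the optimal alignment, be $a_1, a_2, \ldots, a_m$. Each operation $a_t$ is located at some pair of indices in $x$ and $y$; let $p_t$ denote (say) its position in the string $x$ — or, more carefully, the common progress coordinate along the alignment, so that consecutive $a_t, a_{t+1}$ differ in position by at most $e$ by the definition of $e$. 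I would also introduce sentinel positions $p_0 = 0$ and $p_{m+1} = h$ (or $h+1$, depending on indexing conventions) bracketing the alignment.

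The key step is the observation that the alignment restricted to the stretch strictly between $a_t$ and $a_{t+1}$ consists entirely of matches (identity substitutions), hence contributes the same number of characters to $x$ as to $y$, and that this number is exactly the distance $p_{t+1} - p_t - 1$ (or a similarly shifted quantity). Since that distance is at most $e$ by hypothesis, the block between two consecutive operations has length at most $e$; counting the operation itself, each of the $m-1$ internal gaps plus its right endpoint contributes at most $e+1$ characters, and the initial block before $a_1$ and the final block after $a_m$ each contribute at most $e+1$ as well (here the bound on the "gap" to the sentinels is also at most $e$, which is where one must be a little careful — it is really a bound of the form "at most $e$" that must hold for the boundary stretches too, and this needs a short argument or follows from how $e$ is defined as the max over consecutive operations together with the block structure). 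Summing: $h \le (m+1)(e+1)$, which is exactly the claim $h \le (|\mathcal{A}|+1)(e+1)$.

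The main obstacle I anticipate is purely bookkeeping: pinning down the right "progress coordinate" so that (i) consecutive edit operations are genuinely at coordinate-distance $\le e$, (ii) the characters of $x$ and the characters of $y$ are both counted correctly despite insertions/deletions shifting the two index sets relative to each other, and (iii) the boundary blocks (before the first operation, after the last) are handled by the same $e+1$ bound rather than requiring a separate, weaker estimate. A clean way to sidestep (ii) is to work with the alignment itself as a sequence of columns — each column is either a match, a substitution, an insertion, or a deletion — and define the distance between two operations as the number of match-columns strictly between them plus one; then $h$ is at most the total number of columns, and the column count decomposes as (number of operations) + (number of match columns) $\le m + (m+1)e = $ essentially $(m+1)(e+1)$ after absorbing the off-by-ones. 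I would present the argument in this column-based language to keep the off-by-one juggling to a minimum, then state the final inequality.
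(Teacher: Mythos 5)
Your proof is correct and is essentially the same counting argument as the paper's: the paper phrases it contrapositively (if $h>(|\mathcal{A}|+1)(e+1)$, the string contains $|\mathcal{A}|+1$ disjoint blocks of $e+1$ consecutive characters, each of which must contain an operation, contradicting $|\mathcal{A}|$ being the total count), while you sum the block lengths directly. The boundary subtlety you flag (the stretches before the first and after the last operation) is also present in, and glossed over by, the paper's one-line proof, which implicitly assumes every window of $e+1$ consecutive characters --- including those near the two ends --- contains an operation.
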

\begin{proof}
Suppose, by way of contradiction, $h>(|\mathcal{A}|+1)(e+1)$. 
Since $e$ is the maximum distance between the locations of two operations among all consecutive operations in $\mathcal{A}$, then any group of $e+1$ consecutive characters contains an edit operations. 
But there are at least $|\mathcal{A}|+1$ disjoint groups of $e+1$ consecutive characters, so there are at least $|\mathcal{A}|+1$ edit operations. 
This contradicts the definition that $\mathcal{A}$ is the set of all edit operations.
\end{proof}
We now show the correctness of \thmref{thm:main}.
\begin{proofof}{\thmref{thm:main}}
Let $i$ and $j$ be the two endpoints of the longest $d$-near-alignment. 
Also, let $\mathcal{L}$ be the set of edit operations corresponding to the optimal alignment between $S[i,j]$ and $T[i,j]$. 
There are two cases for this alignment. 
Either no two consecutive operations in $\mathcal{L}$ have distance farther than $d+1$ or there exist $d+1$ consecutive positions in $S$ which are aligned to $d+1$ consecutive positions in $T$.

In the first case, the correctness follows from the correctness of \thmref{thm:bz} and \lemmref{lem:window}. 
In this case, $S[i,j]$ and $T[i,j]$ will be covered by the sliding window after reading $S[j]$ and $T[j]$. 
This means that in line (3), $x=j$ and the algorithm will assign $c=i$. 
Thus, the algorithm will report the correct $d$-near-alignment.

Suppose the second case occurs. 
So, there exist $d+1$ consecutive positions in $S[i,j]$ are aligned to $d+1$ consecutive positions in $T[i,j]$, i.e., $S[i_1,j_1]=T[i_2,j_2]$ with $j_1-i_1=j_2-i_2>d$. 
We claim that no character before $i_1$ ($i_2$, respectively) in $S$ ($T$, respectively) could be aligned to any character after $j_2$ ($j_1$, respectively) in an optimal alignment between $S[i,j]$ and $T[i,j]$, as in \figref{fig:align}.

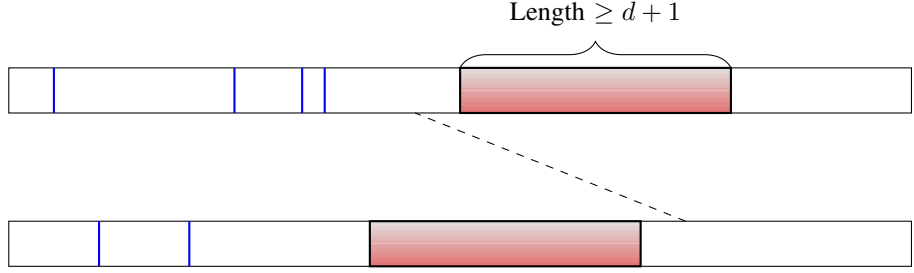
\begin{figure*}[htb]
\centering
\begin{tikzpicture}[scale=0.6]
%streams
\draw (0cm,0cm) rectangle +(20cm,1cm);
\draw (0cm,3.4cm) rectangle +(20cm,1cm);
\filldraw[thick, top color=white,bottom color=red!50!] (8cm,0cm) rectangle+(6cm,1cm);
\filldraw[thick, top color=white,bottom color=red!50!] (10cm,3.4cm) rectangle+(6cm,1cm);
\draw [decorate,decoration={brace,mirror,amplitude=10pt}](16cm,4.4cm) -- (10cm,4.4cm);
\node at (13cm, 5.6cm){Length $\ge d+1$};

\draw[thick, color=blue] (4cm,0cm) -- (4cm, 1cm);
\draw[thick, color=blue] (2cm,0cm) -- (2cm, 1cm);
\draw[thick, color=blue] (5cm,3.4cm) -- (5cm, 4.4cm);
\draw[thick, color=blue] (1cm,3.4cm) -- (1cm, 4.4cm);
\draw[thick, color=blue] (6.5cm,3.4cm) -- (6.5cm, 4.4cm);
\draw[thick, color=blue] (7cm,3.4cm) -- (7cm, 4.4cm);

%\draw (8cm,1cm) -- (10cm,3.4cm);
%\draw (14cm,1cm) -- (16cm,3.4cm);

\draw [dashed] (15cm,1cm) -- (9cm,3.4cm);

%\node at (19cm, -1.4cm){$x$};
%\node at (17.5cm, -1.4cm){$c_{i+3}$};
%\node at (15.5cm, -1.4cm){$c_{i+2}$};
%\node at (11.5cm, -1.4cm){$c_{i+1}$};
%\node at (3.5cm, -1.4cm){$c_i$};
%\node at (-1.3cm, 0.5cm){String $S$};
%\node at (-1.3cm, 1.7cm){String $T$};
%
%
%\node at (14.5cm, 4.4cm){Longest $d$-near-alignment};
%\draw [decorate,decoration={brace,mirror,amplitude=10pt}](3.5cm,-2cm) -- (19cm,-2cm);
%\node at (11.5cm, -3cm){Sketch for each checkpoint $c$ to recover $\ed(S[c,x],T[c,x])$.};
\end{tikzpicture}
\caption{If there exists some alignment in which the red regions are aligned, then nothing before the region can be aligned to anything after the region (the dashed alignment can never exist). Thus, it suffices to keep the locations of the $d$ most recent edit operations before the region (for example, the blue lines)}\figlab{fig:align}
\end{figure*}

Otherwise, more than $d$ insertions or deletions are required.

Therefore, the algorithm will recover the alignment between $S[i,i_1]$ and $T[i,i_2]$ from what it has already stored in $\mathcal{A}$. 
In addition, the alignment between $S[i_1,j]$ and $T[i_2,j]$ is constructed at line (2) and these two alignments are combined in line (4).

The space needed to store $\mathcal{A}$ and $\mathcal{L}$ is $\O{d\log n}$ as there are at most $2d$ operations in each data structure. 
Taking $|\mathcal{A}|\le d$ and $e=(d+1)$ in \lemmref{lem:window} implies that the sliding window $[b,x]$ is $\O{d^2}$ bits long. 
Taking $m=(d+1)^2$ in \thmref{thm:mha} implies that recovery of the edit operations can be done using $\O{d^2+d\log n}$ space.
Hence, the overall memory of the algorithm is $\O{d^2+d\log n}$ bits. 
Again taking $m=(d+1)^2$ in \thmref{thm:mha} shows that the running time per arriving symbol is $\O{d^2\log d}$. 
\end{proofof}

We observe that the running time per arriving symbol can be improved to $\O{d^2}$ by creating a BZ sketch for the entire sliding window. However, this implementation uses $\O{d^2\log d}$ space instead. 

%\begin{lemma}\label{lem:dist}
%Let $x,y\in \Sigma^n$ be two strings of length $n$ such that $\ed(x,y)\le d$. 
%Then any aligned positions can be at most distance $d$ apart. 
%\end{lemma}
%\begin{proof}
%Aligning any positions which are more than distance $d$ apart would require more than $d$ deletions, contradicting $\ed(x,y)\le d$. 
%\end{proof}
%
%\begin{corollary}\label{lem:noCrossingAlignment}
%If there exist $d+1$ consecutive positions in $S$ which are aligned to $d+1$ consecutive positions in $T$, no character before this region may be aligned with any character after this region.
%\end{corollary}

%%\input{exacttwopass}

%root = near-alignment.tex
\section{Lower Bounds}
To prove \thmref{thm:lbone}, we first create a distribution between two strings, over which calculating the edit distance is equivalent to calculating the Hamming distance. 
We then show that any deterministic algorithm that approximates long length $d$-near-alignments under Hamming distance with high probability requires a certain amount of space through a simple counting argument. 
By Yao's Minimax Principle, any randomized algorithm with the same probability of success requires the same amount of space.
 
 \iffalse
\begin{theorem}[Yao's Minimax Principle]\cite{Yao77}\label{thm:yao:minimax}
Let $\mathcal{A}$ be the set of all deterministic algorithms solving a given problem and $\mathcal{X}$ be the set of inputs for the problem. For $A\in\mathcal{A}$ and $x\in\mathcal{X}$, let $c(A,x)\ge0$ be the cost of running $A$ on $x$.

Let $p$ be a probability distribution over $\mathcal{A}$ and let $A$ be chosen from $\mathcal{A}$ according to $p$. Let $q$ be a probability distribution over $\mathcal{X}$ and let $X$ be chosen from $\mathcal{X}$ according to $q$. Then 
\[\max_{x\in\mathcal{X}}\EEx{c(A,x)}\ge\min_{a\in\mathcal{A}}\EEx{c(a,X)}.\]
In other words, the worst-case expected cost of any randomized algorithm is at least the cost of the best deterministic algorithm on input distribution $q$.
\end{theorem}
\fi

To prove \thmref{thm:lbone}, we define $X$ be the set of binary strings of length $n$ with $d$ many $1$'s.  
We pick $x$ independently and uniformly at random from $X$ and $y$ independently and uniformly at random from the set of binary strings of length $n$ with either $\HAM(x,y)=d$ or $\HAM(x,y)=d+1$. 
Define transformation $s(x)=x_1\textbf{1}^{d+1}x_2\textbf{1}^{d+1}\ldots\textbf{1}^{d+1}x_n\textbf{1}^{d+1}$. 
Thus, we pick $(S,T)\sim(s(x),s(y))$.

\newcommand{\clmcommute}{If $\ed(x,y)=d$, then there exist a sequence of $d$ insertions, deletions, or substitutions on $x$ to obtain $y$. Furthermore, we may perform the substitutions first, followed by the insertions, then the deletions.}
\begin{claim} \claimlab{clm:commute}
\clmcommute
\end{claim}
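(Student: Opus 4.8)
The plan is to prove the two assertions of \claimref{clm:commute} in turn. The first --- existence of a sequence of exactly $d$ edit operations --- is essentially the definition of edit distance: $\ed(x,y)=d$ means there is an optimal alignment realizing the distance with $d$ operations, so I would simply invoke that an optimal edit script of length $d$ exists. The substance is the second assertion, that we may reorder such a script so that all substitutions come first, then all insertions, then all deletions. I would argue this by a local-swap (bubble-sort) argument: show that any two adjacent operations in the script that violate the desired order (an insertion or deletion immediately followed by a substitution, or a deletion immediately followed by an insertion) can be swapped, possibly after adjusting which position the operation acts on, without changing the total number of operations and without changing the final string $y$. Iterating these swaps terminates (each swap strictly decreases the number of out-of-order adjacent pairs, or we use a standard potential argument) and yields a script in the canonical order substitutions-insertions-deletions.

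The key steps, in order, would be: (1) Formalize an edit script as a sequence of operations acting on an evolving string, each tagged with a position; fix an optimal script of length $d$ transforming $x$ into $y$. (2) Handle the substitution/indel commutation: if a substitution at some position is preceded by an insertion or deletion, observe that the substitution operates on a character that is either untouched by the indel or whose position is merely shifted by $\pm 1$; hence we can move the substitution earlier, updating its index accordingly, and the composite effect on the string is unchanged. Repeating this floats every substitution to the front. (3) Handle insertion-before-deletion: among the remaining indels, if a deletion is immediately followed by an insertion, we swap them (again with an index shift), unless the insertion reinserts exactly the character just deleted at the same spot --- but in that degenerate case the two operations cancel and the script had length $d$ yet does strictly fewer than $d$ ``effective'' edits, which we can either rule out by optimality or absorb harmlessly since we only claimed existence of \emph{a} sequence of $d$ operations, not that all $d$ are necessary. (4) Conclude that after finitely many swaps the script has the form: all substitutions, then all insertions, then all deletions, still of length $d$, still transforming $x$ into $y$.

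The main obstacle I anticipate is step (3), the bookkeeping for commuting indels past each other: unlike substitutions, insertions and deletions change the \emph{length} of the intermediate string, so the recorded positions of all later operations shift, and one must be careful that swapping a deletion-then-insertion pair into insertion-then-deletion does not accidentally create a different final string when the two operations act at nearby or identical positions. The clean way to manage this is to think of the alignment (the two strings written with gap symbols) rather than the sequential script: an alignment between $x$ and $y$ with $d$ non-matching columns directly encodes a set of substitution columns, insertion columns (gap in $x$), and deletion columns (gap in $y$), and from such an alignment one can simply read off the operations in the order substitutions first, then process the gap columns left-to-right doing insertions, then deletions --- the order within this grouping is forced by the alignment and is automatically consistent. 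I would likely present the argument in that alignment-based language to sidestep the index-shifting headache, remarking only briefly on the equivalence with the sequential-script picture. A secondary minor point to address is why this reordered script still has exactly $d$ operations: each column of the alignment that is not a match contributes exactly one operation, and there are exactly $d$ such columns since the alignment is optimal.
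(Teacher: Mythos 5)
Your proposal is correct, and in its final recommended form (the alignment-based view) it lands on essentially the same idea as the paper's proof, but you reach it by a longer road. The paper's argument skips the bubble-sort machinery entirely: it observes that in an optimal $d$-operation script no character can be inserted and later deleted (or substituted and also inserted/deleted), since dropping the redundant pair would realize the edit distance with fewer than $d$ operations; hence every character is touched by at most one operation, operations on distinct characters commute, and the script can be reordered into substitutions-then-insertions-then-deletions in one step. This is exactly the content of your closing observation that an optimal alignment has $d$ non-matching columns, each contributing one operation, from which the canonical order can be read off directly. The local-swap argument you lead with would also work, but it forces you to confront the index-shifting and degenerate delete-then-reinsert cases that the paper's (and your own alternative) framing dissolves for free --- note that the degenerate case you worry about in step (3) is precisely what the paper's optimality observation rules out. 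If you write this up, lead with the alignment/``each character touched at most once'' argument and drop the bubble sort.
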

\begin{proof}
First, we fix a sequence of $d$ operations to obtain $y$ from $x$, and note that no character can be inserted and subsequently deleted, or else the edit distance between $x$ and $y$ would be less than $d$ by avoiding these operations. 
Similarly, any character which undergoes a substitution should not be involved in either an insertion or a deletion. 
Hence, any character is involved in at most one operation. 
But since a character is not affected by operations on other characters, we may first perform the substitutions, followed by the insertions, then the deletions.
\end{proof}

\begin{lemma}
$\ed(s(x),s(y))=\HAM(x,y)$
\end{lemma}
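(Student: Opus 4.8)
The plan is to establish the two inequalities $\ed(s(x),s(y)) \le \HAM(x,y)$ and $\ed(s(x),s(y)) \ge \HAM(x,y)$ separately. The first direction is the easy one: whenever $x_i \ne y_i$, we can perform a single substitution on the corresponding character of $s(x)$ (which sits in a fixed position, flanked by blocks $\textbf{1}^{d+1}$ on each side that are identical in $s(x)$ and $s(y)$), so $\HAM(x,y)$ substitutions transform $s(x)$ into $s(y)$ and hence $\ed(s(x),s(y)) \le \HAM(x,y)$.

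For the reverse direction, I would argue that the long runs of $1$'s act as ``rigid separators'' that force any optimal alignment of $s(x)$ and $s(y)$ to align each separator block with the corresponding separator block, and each single character $x_i$ with $y_i$. The key quantitative fact to invoke is that $\HAM(x,y) \in \{d, d+1\}$ by construction, so $\ed(s(x),s(y)) \le d+1$; therefore it suffices to show that if the alignment ever ``shifts'' — i.e., aligns some $x_i$ with a character of $s(y)$ more than a bounded distance away, or misaligns a separator block — it incurs more than $d+1$ edits. Concretely: a separator is a run of $d+1$ ones; to align $s(x)$'s $i$-th separator against $s(y)$'s $(i\pm 1)$-th separator or against one of the single characters would require shifting by at least one full separator length, costing at least $d+1$ insertions/deletions \emph{plus} at least one more edit somewhere else (since the strings are not globally shifted copies of each other — the prefix before the first separator would then be misaligned), exceeding $d+1$. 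Here I would lean on \claimref{clm:commute}, which lets me assume an optimal edit sequence does substitutions first, then insertions, then deletions, and in particular that no character is both inserted and later deleted; this makes it possible to track a clean, non-crossing alignment between the two strings.

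Given that each separator block of $s(x)$ must align with the corresponding separator block of $s(y)$ (and these are character-for-character identical, contributing zero edits), the remaining budget is spent exactly on aligning the single characters $x_1,\dots,x_n$ with $y_1,\dots,y_n$ in order, one-to-one, with no room for insertions or deletions among them. In such an alignment, the cost is precisely the number of positions $i$ with $x_i \ne y_i$, i.e., $\HAM(x,y)$. Combining the two inequalities yields $\ed(s(x),s(y)) = \HAM(x,y)$.

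\textbf{Main obstacle.} The delicate step is the rigidity argument for the separators: formally ruling out ``clever'' alignments that break a separator block into pieces, align part of it with single characters, and compensate elsewhere. The clean way to handle this is a counting/budget argument — show that any deviation from the canonical block-to-block alignment strictly increases the number of unaligned or mismatched positions beyond $d+1$ — using the fact that the $x_i$'s are surrounded by separators of length $d+1$, strictly larger than the total edit budget $d+1$ would allow to be ``eaten'' without leaving residual mismatches. Making this airtight (rather than hand-wavy) is where most of the real work lies; \claimref{clm:commute} is the tool that keeps the alignment structure simple enough to carry this out.
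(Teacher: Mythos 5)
Your proposal is correct and follows essentially the same route as the paper: both directions hinge on the observation that the total edit budget is at most $d+1$ (since $\HAM(x,y)\le d+1$) while the separator runs $\textbf{1}^{d+1}$ have length $d+1$, and both invoke \claimref{clm:commute} to normalize the order of operations. The ``delicate rigidity step'' you flag is dispatched in the paper more directly than via block-to-block alignment: after performing the substitutions, any position where the strings still disagree must be a $0$ in one string whose nearest $0$ in the other string is at least $d+1$ positions away, so repairing it would cost at least $d+1$ further insertions/deletions and blow the budget, forcing every operation to be a substitution.
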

\begin{proof}
By \claimref{clm:commute}, we may perform the substitutions first, followed by the insertions, then the deletions to obtain $s(y)$ from $s(x)$. 
Let $s_1(x)$ be $s(x)$ following the sequence of substitutions.
Suppose there exists a position in $s_1(x)$ which does not equal the corresponding position in $s(y)$.
Then the position is zero in one of $s_1(x)$ or $s(y)$. 
However, the nearest zero in the other string is at least $d+1$ positions away, requiring at least $d+1$ additional operations. 
Since $\ed(s(x),s(y))\le\HAM(s(x),s(y))\le d+1$, then it follows that every single operation to obtain $s(y)$ from $s(x)$ must be a substitution, and so $\ed(s(x),s(y))\ge\HAM(s(x),s(y))$.
By construction, $\HAM(s(x),s(y))=\HAM(x,y)$ and the result follows.
\end{proof}

\newcommand{\lemmham}{Any algorithm $\mathcal{D}$ using less than $\frac{d\log n}{3}$ bits of memory cannot distinguish between $\HAM(x,y)=d$ and $\HAM(x,y)>d$ with probability at least $1-1/n$.}
\begin{lemma}\lemmlab{lemm:ham}
\lemmham
\end{lemma}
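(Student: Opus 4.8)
The plan is to combine a pigeonhole (counting) argument with Yao's minimax principle. I would fix an arbitrary \emph{deterministic} algorithm $\mathcal{D}$ that uses fewer than $\frac{d\log n}{3}$ bits and show that, over the distribution of $(x,y)$ described above, $\mathcal{D}$ produces the wrong answer with probability strictly more than $1/n$. Since this holds for every deterministic algorithm in this memory regime, Yao's principle then rules out any randomized algorithm that succeeds with probability at least $1-1/n$.

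First I would examine the configuration of $\mathcal{D}$ at the instant it finishes reading the first string. As $x\mapsto s(x)$ is a bijection on $X$, feeding $s(x)$ is equivalent to feeding $x$, and the memory content at that moment is a function $\sigma(x)$ taking at most $M<2^{(d\log n)/3}=n^{d/3}$ distinct values; from that point on the answer of $\mathcal{D}$ is a deterministic function $g(\sigma(x),y)$ of $\sigma(x)$ and of the second string alone. I would then contrast $M$ with $|X|=\binom{n}{d}\ge (n/d)^{d}$: since $d=o(\sqrt n)$ (hence $d=o(n^{2/3})$), we have $\binom{n}{d}/n^{d/3}\ge (n^{2/3}/d)^{d}\to\infty$, and in particular $n^{d/3}/\binom{n}{d}<1/n$ for $n$ large. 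Therefore all but fewer than $M$ of the strings in $X$ are sent by $\sigma$ to a value shared with some other string in $X$; calling such an $x$ \emph{ambiguous}, we get $\Pr_{x}[x\text{ is ambiguous}]\ge 1-n^{d/3}/\binom{n}{d}>1-1/n$.

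The crux is the fooling step. Fix an ambiguous $x$ together with a companion $x'\ne x$ with $\sigma(x')=\sigma(x)$; then $\mathcal{D}$ outputs the identical answer $g(\sigma(x),y)$ on $(x,y)$ and on $(x',y)$ for \emph{every} second string $y$. I would next exhibit a sizeable family $\mathcal{Y}$ of strings $y$ that are admissible second strings for $x$ and for which the true answer with respect to $x$ is opposite to the true answer with respect to $x'$: since $x$ and $x'$ have equal weight but are distinct they disagree in at least two coordinates, and flipping an appropriately chosen set of coordinates of $x$ — touching only part of that disagreement set — produces a $y$ with $\HAM(x,y)$ on one side of the threshold $d$ while $\HAM(x',y)$ is on the other; a binomial count then shows $\mathcal{Y}$ is a noticeable fraction of the conditional support of $y$ given $x$. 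On each $y\in\mathcal{Y}$, $\mathcal{D}$ errs on $(x,y)$ or on $(x',y)$, so $\Pr_{y\mid x}[\mathcal{D}\text{ errs on }(x,y)]$ together with the analogous quantity for $x'$ is bounded below by a positive constant; summing over the $\ge 1-1/n$ mass of ambiguous strings yields an overall error probability bounded below by a positive constant, which for large $n$ exceeds $1/n$. Invoking Yao's principle completes the argument.

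I expect the delicate part to be this fooling step, in particular the quantitative claim about $|\mathcal{Y}|$. Pigeonhole by itself only produces a single bad continuation per ambiguous string, which gives positive but potentially negligible error; to push the error past the $1/n$ threshold one must verify that a genuinely substantial fraction of the admissible continuations of an ambiguous $x$ are \emph{simultaneously} admissible for its companion $x'$ and are labeled oppositely by the ground truth. Doing so cleanly requires a careful count over the disagreement pattern of the pair $(x,x')$ and over the coordinate sets flipped to build $y$, tracking the weight and parity constraints built into the hard distribution.
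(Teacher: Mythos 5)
Your skeleton --- pigeonhole on the memory configuration $\sigma(x)$ after the first string, fooling pairs sharing a configuration, then Yao --- is exactly the paper's, and your configuration count is fine. The gap is in the fooling step, which you yourself flag as the delicate part, and your quantitative expectations for it are wrong. Every fooling string $y$ must have $\HAM(x,y)=d$ (that is the only way the true answers for $x$ and its companion $x'$ can straddle the threshold from $x$'s side), but the conditional support of $y$ given $x$ is dominated by the $\binom{n}{d+1}$ strings at distance $d+1$. Hence $\mathcal{Y}$ can be at most an $O(d/n)$ fraction of the support, never a ``noticeable fraction,'' and the distributional error probability is $\Theta(d/n)$ rather than a positive constant. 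It clears the $1/n$ threshold only because of the hypothesis $d>7$; the paper has to verify this explicitly by computing $\frac{1}{4}\binom{n-2d}{d}\big/\binom{n+1}{d+1}$ and invoking Bernoulli's inequality. An argument that stops at ``a positive constant for large $n$'' is proving something false and hides exactly the computation the lemma needs.

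Second, your requirement that $y$ be \emph{simultaneously admissible} for the companion $x'$ cannot be met: $x$ and $x'$ both have weight $d$, so $\HAM(x,x')$ is even, hence $\HAM(x,y)\equiv\HAM(x',y)\pmod 2$ and the two distances can never be $d$ and $d+1$ respectively. The paper's construction instead takes $\mathcal{I}=\mathrm{supp}(x)\cup\mathrm{supp}(x')$ and lets $\mathcal{Y}$ be the $\binom{n-|\mathcal{I}|}{d}\ge\binom{n-2d}{d}$ strings obtained by flipping $d$ positions of $x$ lying entirely \emph{outside} $\mathcal{I}$ --- avoiding the disagreement set altogether rather than ``touching only part of it'' --- which forces $\HAM(x',y)=d+\HAM(x,x')\ge d+2$. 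The error is then charged to whichever of $(x,y)$, $(x',y)$ the common output is wrong for, with a separate remark that these charges are not overcounted across strings mapped to the same configuration. Without a concrete choice of $\mathcal{Y}$, the parity obstruction resolved, and the $\Theta(d/n)$ computation carried out, your proposal does not yet establish the stated $1/n$ bound.
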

%We defer the proof to the appendix.
\begin{proof}
%\begin{proofof}{\lemmref{lemm:ham}}
Note that $|X|=\binom{n}{d}$.
By Stirling's approximation, $|X|\ge\left(\frac{n}{d}\right)^d$. 
Since $d=o(\sqrt{n})$, then $|X|\ge\left(n\right)^{d/2}$.

If $\mathcal{D}$ uses less than $\frac{d\log n}{3}$ bits of memory, then $\mathcal{D}$ has at most $2^{\frac{d\log n}{3}}=n^{d/3}$ unique memory configurations. 
Since $|X|\ge\left(n\right)^{d/2}$, then there are at least $\frac{1}{2}(|X|-n^{d/3})\ge\frac{|X|}{4}$ pairs $x,x'$ such that $\mathcal{D}$ has the same configuration after reading $x$ and $x'$.
We show that $\mathcal{D}$ errs on a significant fraction of these pairs $x,x'$.

Let $\mathcal{I}$ be the positions where either $x$ or $x'$ take value $1$, so that $d+1\le|\mathcal{I}|\le 2d$.
Observe that if $\HAM(x,y)=d$, but $x$ and $y$ do not differ in any positions of $\mathcal{I}$, then $\HAM(x',y)>d$. 
Recall that $\mathcal{D}$ has the same configuration after reading $x$ and $x'$, but since $\HAM(x,y)=d$ and $\HAM(x',y)>d$, then the output of $\mathcal{D}$ is incorrect for either $\HAM(x,y)$ or $\HAM(x',y)$.

For each pair $(x,x')$, there are $\binom{n-|\mathcal{I}|}{d}\ge\binom{n-2d}{d}$ such $y$ with $\HAM(x,y)=d$, but $x$ and $y$ do not differ in any positions of $\mathcal{I}$. 
Hence, there are $\frac{|X|}{4}\binom{n-2d}{d}$ strings $s(x,y)$ for which $\mathcal{D}$ errs. 
We note that there is no overcounting because the output of $\mathcal{D}$ can be correct for at most one $\HAM(x_i,y)$ for all $x_i$ mapped to the same configuration. 
Recall that $y$ satisfies either $\HAM(x,y)=d$ or $\HAM(x,y)=d+1$ so that there are $|X|\left(\binom{n}{d}+\binom{n}{d+1}\right)$ pairs $(x,y)$ in total. 
Thus, the probability of error is at least
\[\frac{\frac{|X|}{4}\binom{n-2d}{d}}{|X|\left(\binom{n}{d}+\binom{n}{d+1}\right)}
=\frac{1}{4} \cdot \frac{\binom{n-2d}{d}}{\binom{n+1}{d+1}}\]
\[=\frac{(d+1)}{4}\frac{(n-3d+1)\ldots (n-2d)}{(n-d+1) \ldots  (n+1)}\]
Since $\frac{n-3d+1}{n-d+1}\le\frac{n-3d+i}{n-d+i}$ for all $i\ge1$, it follows that the probability of error is at least
\[\frac{d+1}{4(n+1)}\left(\frac{n-3d+1}{n-d+1}\right)^d
=\frac{d+1}{4n+4}\left(1-\frac{2d}{n-d+1}\right)^d\]
Then by Bernoulli's Inequality (which states that $(1+x)^r\ge 1+rx$ for $x\ge-1$ and $r\ge1$), the probability of error is at least
\[\frac{d+1}{4n+4}\left(1-\frac{2d^2}{n-d+1}\right)\ge\frac{1}{n}\]
Since $d=o(\sqrt{n})$, then for large $n$, it follows that $1-\frac{2d^2}{n-d+1}\ge\frac{1}{2}$. 
Hence, for $d>7$, the probability of error is at least $\frac{1}{n}$.

Therefore, $\Omega(d\log n)$ bits of memory are necessary to distinguish between $\HAM(x,y)=d$ and $\HAM(x,y)>d$ with probability at least $1-1/n$.
%\end{proofof}
\end{proof}

Now, we use a simple trick to show that any sketch  providing a $(1+\eps)$-approximation to the length of the longest $d$-near-alignment under the edit distance with probability at least $1-1/n$ requires $\Omega(d\log n)$ space.

\begin{proofof}{\thmref{thm:lbone}}
Recall that $s(x)=x_1\textbf{1}^{d+1}x_2\textbf{1}^{d+1}\ldots\textbf{1}^{d+1}x_n\textbf{1}^{d+1}$. 
Define string $t(x)=\textbf{1}^{(d+1)n/2}x\textbf{1}^{(d+1)n/2}$ so that the longest $d$-near-alignment of $t(s(x))$ and $t(s(y))$ has length $2(d+1)n$ if $\ed(x,y)\le d$. 

On the other hand, if $\ed(x,y)>d$, then the longest $d$-near-alignment of $t(s(x))$ and $t(s(y))$ has length at most $(d+1)n$. 
Thus, a $(1+\eps)$-approximation to the length of the longest $d$-near-alignment of $t(s(x))$ and $t(s(y))$ differentiates whether $\HAM(x,y)\le d$ or $\HAM(x,y)>d$.
Since $t(s(x))$ has length $2(d+1)n$, any sketch which achieves this requires $\Omega(d\log(n/d))$ bits. 
Because $d=o(\sqrt{n})$, then the result follows.
\end{proofof}

We now turn our attention to \thmref{thm:lbtwo}, which states that any algorithm computing a $(1+\eps)$-multiplicative or $E$-additive approximation of the length of the longest $d$-near alignment under the edit distance and outputs the necessary edit operations requires $\Omega(d\log n)$ bits, even in the simultaneous streaming model. 
Furthermore, simply determining the length of the longest $d$-near alignment also requires $\Omega(d\log n)$ bits.

\begin{proofof}{\thmref{thm:lbtwo}}
We first prove that any algorithm that computes a $(1+\eps)$-multiplicative approximation of the length of the longest $d$-near alignment under the edit distance requires $\Omega(d\log n)$ bits using a reduction from the corresponding problem from communication complexity. Namely, in the communication complexity model, Alice receives  the first half of both $S$ and $T$, and Bob receives the second half of $S$ and $T$; their goal is to find the longest $d$-near-alignment between $S$ and $T$.  
Now, suppose $S\left[1,\frac{n}{2}\right]$ and $T\left[1,\frac{n}{2}\right]$ have edit distance $d$, and none of the edit operations occur within the first $\left(1-\frac{1}{1+\eps}\right)n$ positions of $S$ and $T$. 
Thus, Alice must communicate the locations of all edit operations (i.e., $\Omega(d\log n)$ bits.), as any one of these locations could be the beginning of the longest $d$-near-alignment. 
%Since Alice must communicate $d$ locations, each of $\log n$ bits, she must communicate $\Omega(d\log n)$ bits. 

We observe that an algorithm that computes a $E$-additive approximation of the length of the longest $d$-near alignment under the edit distance and outputs the necessary edit operations also forces Alice to communicate the locations of the $d$ most recent edit operations, provided that $S\left[1,\frac{n}{2}\right]$ and $T\left[1,\frac{n}{2}\right]$ have edit distance $d$ and none of the edit operations occur in the first $E$ locations of Alice's input.

Finally, if Alice and Bob must output the length of the longest $d$-near-alignment, and $S\left[1,\frac{n}{2}\right]$ and $T\left[1,\frac{n}{2}\right]$ have edit distance $d$, then Alice must output the locations of the $d$ most recent edit operations. 
\end{proofof}
\def\shortbib{0}
\bibliographystyle{plain}
\bibliography{references}
\end{document}